\documentclass[a4paper]{amsart}
\usepackage[foot]{amsaddr}
\usepackage[T1]{fontenc}
\usepackage{fullpage}
\usepackage{amsmath,amsthm,amssymb,amscd,amsfonts}
\usepackage[dvipsnames]{xcolor}
\definecolor{darkgreen}{rgb}{0,0.5,0}
\definecolor{darkblue}{rgb}{0,0,0.5}
\usepackage[colorlinks=true,urlcolor=darkblue,citecolor=darkgreen,linkcolor=darkblue]{hyperref}
\usepackage{doi}
\usepackage{graphicx}
\usepackage{tabularray}
\usepackage{adjustbox}
\usepackage{tikz}
\usepackage{makecell}
\usepackage[version=4]{mhchem}

\usetikzlibrary{graphs, quotes}

\usepackage{multirow}%
\usepackage{mathrsfs}%
\usepackage[title]{appendix}%
\usepackage{xcolor}%
\usepackage{textcomp}%
\usepackage{manyfoot}%
\usepackage{booktabs}%
\usepackage{algorithm}%
\usepackage{algorithmicx}%
\usepackage{algpseudocode}%
\usepackage{listings}%
\usepackage{hhline}%

\definecolor{reb}{RGB}{30,144,255}

\newcommand{\bbC}{\mathbb{C}}
\newcommand{\bbE}{\mathbb{E}}

\newcommand{\bbR}{\mathbb{R}}

\newcommand{\calG}{\mathcal{G}}
\newcommand{\calP}{\mathcal{P}}

\newcommand{\expecO}{\langle O \rangle}
\newcommand{\expecP}{\langle P \rangle}
\newcommand{\estO}{\widehat{\expecO}}

\newcommand{\eps}{\varepsilon}

\newcommand{\QWC}{\mathrm{QWC}}
\newcommand{\FC}{\mathrm{FC}}
\newcommand{\HF}{\mathrm{HF}}
\newcommand{\Ha}{\mathrm{Ha}}

\DeclareMathOperator{\Bias}{Bias}
\DeclareMathOperator{\Cov}{Cov}
\DeclareMathOperator{\MSE}{MSE}
\DeclareMathOperator{\poly}{poly}
\DeclareMathOperator{\trace}{tr}
\DeclareMathOperator{\Var}{Var}

\usepackage[sc]{mathpazo}
\newtheorem{thm}{Theorem}[section]
\newtheorem{prop}[thm]{Proposition}

\theoremstyle{remark}
\newtheorem{remx}[thm]{Remark}
\newtheorem{examplex}[thm]{Example}

\theoremstyle{definition}
\newtheorem{defn}[thm]{Definition}

\newcommand{\qedblack}{\hfill \scalebox{0.7}{\ensuremath{\blacksquare}}}
\newenvironment{example}
  {\pushQED{\qedblack}\begin{examplex}}
  {\popQED\end{examplex}}

\newenvironment{rem}
  {\pushQED{\qedblack}\begin{remx}}
  {\popQED\end{remx}}
\renewcommand{\epsilon}{\varepsilon}

\makeatletter
\renewcommand{\@setemails}{%
  \mbox{\itshape E-mail address:\space}{\ttfamily\emails}%
}
\makeatother

\begin{document}

\title{Variance reduction methods in the estimation of Pauli sums}
\author{S\o ren Fuglede J\o rgensen$^1$}
\email{sfj@kvantify.dk}
\author{Rafael Emilio Barfknecht$^1$}
\author{Patrick Ettenhuber$^1$}
\author{Nikolaj Thomas Zinner$^{1,2}$}
\address{$^1$Kvantify, Rosenvængets Allé 25, DK-2100 Copenhagen, Denmark}
\address{$^2$Department of Physics and Astronomy, Aarhus University, DK-8000 Aarhus C, Denmark}

\date{\today}

\begin{abstract}
    Accurately estimating expectation values of quantum observables with as few measurements as possible is crucial to many quantum computing applications. We introduce a framework that covers many of existing measurement strategies and introduce heuristics that can be used to enhance randomized schemes, including those based on Pauli grouping with inverse probability weighting and variants of the classical shadow algorithm. We show how to maximize information gain from such schemes, while carefully optimizing the distribution of possible measurements, and show that simple grouping algorithms can get close to, and in some cases exceed, state-of-the-art accuracy for unbiased estimation of expectation values on a standard quantum chemistry benchmark. We show how these randomized methods may be compared to more recent measurement schemes, such as shadow grouping, derandomized shadow, and overlapped grouping measurement, we show how the same strategies can be used to augment these schemes, and we demonstrate that we can reduce measurement costs by up to a factor of two by allowing Clifford measurement circuits for otherwise Clifford-less methods.
\end{abstract}

\maketitle

\tableofcontents

\section{Introduction}

In the circuit model of quantum computation, the observables corresponding to physically relevant quantities are represented as multi-qubit operators. In general, the evaluation of expectation values of these observables in a certain quantum state, prepared using a given quantum circuit, is done by performing measurements of the state in a number of different measurement bases, obtained by decomposing the observables of interest into independently measurable parts.

The scaling of the number of measurements required to estimate an expectation value to a given precision can be unfavorable. For example, assume that the observable is a many-body quantum Hamiltonian, such as a molecular Hamiltonian equipped with a pre-defined basis set consisting of $n$ orbitals, then the Hamiltonian may be decomposed into a number of Pauli strings, typically growing like $O(n^4)$. Hence, without further optimization, the number of measurements required to estimate its expectation value to a given precision scales correspondingly. The amount of resources required to perform this estimation of course also depends on the required precision, and even for small molecules with modest basis sets, the resulting measurement counts can reach into the millions. What is more, in practical problems of interest, one may not know the relevant atomic configuration, or indeed the relevant state, a priori, and determining the state of interest itself involves repeated estimation of expectation values.

Our aim is to show how a sequence of heuristic algorithms can be combined to efficiently construct estimators which perform favorably compared to other available algorithms for the same problem. This work ties into a significant body of literature, with roots in shadow tomography \cite{Aaronson2020-tf}, and in work on graph colouring of Pauli non-commutation graphs: With the classical shadow algorithm \cite{Huang2020-ws, Elben2022-wz}, in which measurement procedures are randomized, one may efficiently measure large numbers of observables simultaneously. We will primarily be interested in measuring a polynomially-sized collection of known observables, and methods such as derandomized shadow \cite{Huang2021-ga} and locally-biased classical shadow \cite{Hadfield2022-vc, Zhang2023-lr} show that one may include the structure of the observables to improve on the concrete measurement procedures created by classical shadow. At the same time, algorithms based on optimally decomposing observables into smaller measurable parts, generally referred to as Pauli grouping algorithms, allow similar reductions in measurement requirements. \cite{Izmaylov2019-gd, Jena2019-hx, Yen2020-lk, Verteletskyi2020-qz, Zhao2020-xa, Gokhale2020-jc, Crawford2021-gz, Yen2023-zv, Burns2025-do} These include algorithms based on clique cover and colouring problem solvers which may be seen as defining static measurement schemes in which the choice of one measurement does not have any impact on which ones follow. Examples of adaptive schemes include \cite{Huang2021-ga,Hadfield2021-hx,Shlosberg2023-ok,Wu2023-sw,Liang2024-qd, Zhu2024-na, Gresch2025-kd, Gresch2025-zy,Li2025-vm}, which can also involve making favorable bias-variance tradeoffs during estimation. In all of this work one assumes that the observables of interest are represented as linear combinations of Pauli strings -- a native representation in the standard quantum circuit model -- but we note that one can take a wider perspective: For example, if the observable of interest is the Hamiltonian of a many-body fermionic system, this additional structure can be exploited, and one is led to perform fermionic tomography in which measurements are performed using matchgate circuits. \cite{Bonet-Monroig2020-qg, Jiang2020-ls, Zhao2021-gn, Wan2023-yf, Bian2025-au, Heyraud2025-zs} Our focus will be on measurement schemes in which measurements are performed using Clifford circuits and circuits consisting of only single qubit gates.

In the era of noisy quantum hardware, estimations of observables are typically used in hybrid quantum-classical algorithms that attempt to exploit the best of both sides \cite{bharti2022noisy}. Arguably, the most well-known case is 
 the variational quantum eigensolver (VQE, \cite{Peruzzo2014-uw}) in which a classical optimization algorithm searches the space of states guided by energy estimation \cite{tilly2022variational}. 
 More generally, this has spurred investigations into general variational quantum algorithms in which one typically has the issue that a number of classical parameters need to be optimized through extensive calls, and hence many measurements, to the quantum computer \cite{cerezo2021variational}. As in classical optimization tasks, a local strategy can often be advantageous, and in the case of molecular Hamiltonians a method such as ADAPT-VQE \cite{Grimsley2019-cp} or its variants such as FAST-VQE \cite{Majland2023-lz} has shown promise. In these local variational quantum algorithms, the optimization is guided by estimations of the derivatives of the energy in pre-determined directions. Here, classical shadow has been discussed as a means to avoid vanishing gradients \cite{sack2022avoiding,boyd2022training}. Another near-term application is the use of classical shadow techniques to measure wave function overlaps on a quantum computer that are subsequently used as input in auxiliary-field quantum Monte Carlo run on classical hardware to achieve precise results in quantum simulation \cite{huggins2022unbiasing,amsler2023quantum}. Moreover, the post-processing costs of this scheme has been reduced by using classical shadows based on random matchgate circuits \cite{Wan2023-yf}. More recent work has demonstrated further reductions in resources like qubit counts \cite{kiser2024classical}, and demonstrations in both superconducting \cite{huang2024evaluating} and ion traps have recently been reported \cite{zhao2025quantum}. In any of these hybrid schemes, the measurement overhead is always a key factor. 
 Consequently, there is great interest in trying to minimize the measurement requirements and to come up with algorithms that produce efficient measurement procedures. In the present work, our benchmarks will address the problem of estimations of sums of Pauli strings.

As an additional outlook, we note that classical shadow as a means to extract power from present noisy hardware has been explored for a handful of years now, starting with the demonstration of \cite{huang2022quantum}. In later studies, classical shadow has been used for tomography \cite{levy2024classical}, read-out error mitigation \cite{arrasmith2023development} and to deal with open quantum systems \cite{birke2026demonstrating}, as well as studies of how to use shadow techniques on platforms not based on the standard qubit paradigm, e.g. photonic quantum computers \cite{thomas2025shedding}. This line of study highlights the need to systematically deal with noise and errors in hardware \cite{chen2021robust}, and the question of how to verify results obtained using classical shadow \cite{karaiskos2025hard}. A large subfield of near-term quantum hardware developments is about mitigating errors, and classical shadow has found numerous uses here, both for present day hardware \cite{seif2023shadow,zhao2024group,wu2024error}, and for looking ahead towards the early fault-tolerant quantum devices that are currently under development \cite{jnane2024quantum,chan2025algorithmic}. In our discussions below, we will not consider the question of errors induced by noise but leave it for future work.

\subsection{Contributions of the paper}

In this paper, we treat the case of observables that are linear combinations of Pauli strings, and we carefully go through the known methods and the subproblems they aim to solve; we offer new perspectives on each of these, provide a comprehensive foundation for comparison, and show how they can be composed to provide new estimators. Along the way, we numerically benchmark all available options on the ground state energy estimation problem for a small library of standard molecules which has formed the de facto standard set of test instances for the problem.

Concretely, the estimators we consider are obtained as combinations of strategies for solving each of several subproblems; see Figure~\ref{fig:strategy-overview} for an overview.
\begin{figure}
    \centering
    \includegraphics[width=1\linewidth]{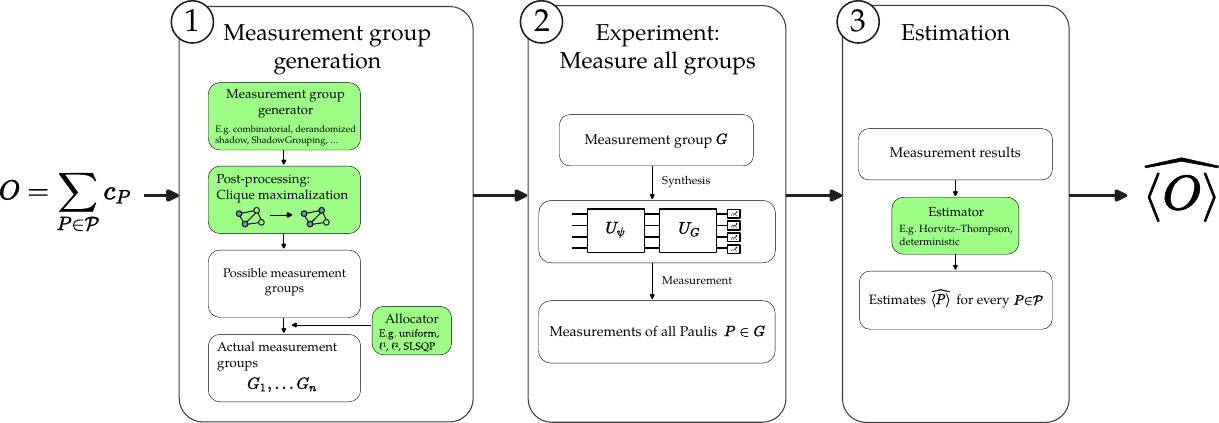}
    \caption{Overview of the strategies we consider: The observable $O$ of interest is used as input for a \emph{measurement group generator} to generate a collection of groups of Pauli strings. These groups may then be post-processed before an \emph{allocator} decides how many shots to associate to each group, or which probabilities to associate to each group in a randomized setting. Given a circuit $U_\psi$ preparing the state of interest, for each group $G$, one synthesizes a measurement circuit $U_G$, and by measuring the combined circuit, measurements for each Pauli string are collected and combined into an estimate $\estO$ of $\expecO$ in a way that depends on a choice of estimator type. The boxes highlighted in green indicate the subproblems for which we will consider several different strategies.}
    \label{fig:strategy-overview}
\end{figure}
\setcellgapes{3pt}  

\begin{itemize}
    \item \emph{Estimator type}: Earlier work on variance reduction methods for estimation of Pauli sums may be divided into that which produces randomized measurement schedules (using inverse probability weighting, or Horvitz--Thompson estimation) and that which relies on a schedule of $M > 0$ fixed measurements (in some cases by \emph{derandomizing} randomized schedules). We consider both variants (Section~\ref{sec:var-reduction}), and work to place them on the same footing, thereby providing a foundation for comparing estimators with different origins. This allows us to understand, for example, the benefit of using recent methods like ShadowGrouping \cite{Gresch2025-zy} compared to older ones like $\ell^1$ sampling (Section~\ref{sec:comparison-with-deterministic}).
    \item \emph{Measurement group generation}: Several heuristic algorithms for determining which measurements to perform exist. Throughout, we compare to known combinatorial methods, such as LDF grouping which is popular in practice, and several methods based on shadow estimation. We also introduce (Section~\ref{sec:stratified-sampling}) several new algorithms (ILP, G-SC, CG, LDVF, LVF) and find, for instance, that a structure-aware adaptation of LDF, which we call LVF (lowest variance first), offers a simple way to reduce resource requirements. (Section~\ref{sec:variance-aware-colouring}).
    \item \emph{Post-processing}: It is known that when used directly, measurement schemes based only on LDF grouping are wasteful, in the sense that they discard otherwise free information about estimates of summands of the observables being estimated. To alleviate this issue, and to put methods like LDF grouping on the same footing as other methods such as those based on shadow estimation (which output measurement bases rather than groups of Pauli strings), we consider a simple maximalization heuristic; we find that it can be used as a simple augmentation to greatly improve certain methods (Section~\ref{sec:unmax-vs-max-ldf}) and, in some benchmark instances, to obtain state-of-the-art results (Section~\ref{sec:comparison-results}). By allowing Clifford measurement circuits, we extend this maximalization procedure to what we call Cliffordization (Section~\ref{sec:cliffordization}), which can be used as a computationally cheap strategy to extend the otherwise Clifford-less methods like ShadowGrouping, effectively halving the number of measurements needed to achieve a given accuracy (Section~\ref{sec:comparison-results}).
    \item \emph{Allocation}: For Horvitz--Thompson estimators, and in inverse probability weighting methods in general, more useful measurements are included with higher probability, generally by relying on the structure of the observables of interest. Similarly, for deterministic estimators, one might want to assign different shot counts to different measurement bases. In both cases, we consider the problem of optimally allocating resources to the generated measurements. We find, for instance, that properly optimizing the probability can provide improvements over the $\ell^1$ sampling that has
    featured in earlier work (Section~\ref{sec:inv-prob-groups}). We also find that, once one conditions the randomized estimators on a particular design, this benefit seems to largely vanish (Section~\ref{sec:comparison-with-deterministic}). That is largely due to the nature of Horvitz--Thompson estimation, and more information about the structure of the quantum state is necessary to improve on allocation schemes (Section~\ref{sec:allocation-results}).
\end{itemize}

Note that in this framework, we may recover known methods as special cases; for example, the LDF grouping of \cite{Hadfield2021-hx} is the combination (Horvitz--Thompson, LDF, None, $\ell^1$), and the ShadowGrouping method of \cite{Gresch2025-zy} is (Deterministic, ShadowGrouping($M$), None, Uniform). This allows us to perform a structured comparison between all the different combinations, from which we obtain new evaluations of the usefulness of existing methods.

We find, for example, that out of the ones being considered, even though the more advanced measurement generators like ShadowGrouping are generally the most useful, the most accurate estimator for the ground state energy of \ce{BeH2} is one based on the LDF grouping (Section~\ref{sec:comparison-results}), a comparatively simple combinatorial method. On the other hand, we find that for measurement generators based on solving combinatorial problems, there is no clear relation between having a low number of distinct measurements, i.e. a low number of groups of Pauli strings, and the quality of the resulting estimator (Section~\ref{sec:comparison-of-grouping-methods}).

It also allows us to find new combinations that further reduce resource requirements: for example, we find that for the problem of estimating the ground state energy of \ce{NH3}, the lowest resource requirements are obtained from (Deterministic, ShadowGrouping($3\lvert \calP \rvert)$, Cliffordization, $\ell^2$) (Section~\ref{sec:comparison-results}, Table~\ref{tab:sota}).

\subsection{Structure of the paper}
The collection of all possible combinations of choices of strategies defines a very large collection of estimators. Rather than collecting all of them into one large comparison, we gradually build up the framework as outlined above, and provide results along the way, so that at each point we can determine which combinations of strategies can be useful, and which can be discarded.

In Section~\ref{sec:pauli-sums} we describe the general family of observables that we will handle. In Section~\ref{sec:benchmark} we introduce the benchmark molecules that will be used to guide the algorithm development. In Section~\ref{sec:var-reduction} we introduce the general statistical framework that helps define what we consider a ``good'' estimator. In Section~\ref{sec:stratified-sampling} we will see how the well-known framework of stratified sampling is a useful first step, and we study several algorithms for producing good strata. In Section~\ref{sec:importance-sampling} we turn to importance sampling, and specifically inverse probability weighting. In Section~\ref{sec:maximalization} we will see how in our case, partitioning into disjoint strata is generally informationally wasteful and we discuss possible remedies. In Section~\ref{sec:bias-variance} we show how introducing bias can be helpful in practice. Finally, in Section~\ref{sec:comparison-with-existing} we show how, without introducing bias and while having a static measurement procedure, the combination of these methods leads to estimators that are both unbiased and have low variance, and therefore to measurement procedures that require a lower number of measurements to reach a given target accuracy.

\subsection*{Acknowledgements}
The authors acknowledge fruitful discussion with Jaco van de Pol, Irfansha Shaik, Asbjørn Teilmann, Thomas Kj{\ae}rgaard, Kasper Poulsen, Thomas B{\ae}kkegaard, Sebastian Yde Madsen, and Janus Wesenberg.  
This work was funded in part by the European Innovation Council through Accelerator grant no. 190124924.
\section{Observables as Pauli sums}
\label{sec:pauli-sums}

Throughout, we will be dealing with $n$-qubit observables that are assumed to be given as a real linear combination of Pauli strings. We recall some common terminology. Concretely, let
\[
    I = \begin{pmatrix} 1 & 0 \\ 0 & 1 \end{pmatrix}, \quad
    X = \begin{pmatrix} 0 & 1 \\ 1 & 0 \end{pmatrix}, \quad
    Y = \begin{pmatrix} 0 & -i \\ i & 0 \end{pmatrix}, \quad
    Z = \begin{pmatrix} 1 & 0 \\ 0 & -1 \end{pmatrix}.
\]

\begin{defn}
A \emph{Pauli string} $P$ is any tensor product of $n$ of these four matrices, i.e. one of $4^n$ different operators acting on $(\bbC^2)^{\otimes n}$. As we will only be dealing with tensor products of those four matrices, we will use the common word notation $P = P_1 \cdots P_n = P_1 \otimes \cdots \otimes P_n$, $P_i \in \{I, X, Y, Z\}$.
\end{defn}
\begin{defn}
For a set $\calP$ of Pauli strings, with associated real numbers $c_P \in \bbR$, $P \in \calP$, we refer to the observable $O = \sum_{P \in \calP} c_P P$ as a \emph{Pauli sum}.
\end{defn}

We will be particularly interested in classes of Pauli sums that are sums of polynomially many terms, $\lvert \calP \rvert = \poly(n)$. For example, if the observable $O$ is the molecular Hamiltonian in an electronic structure problem, $\lvert \calP \rvert = O(n^4)$. This will play a role when considering the efficiency of algorithms for finding ``useful'' decompositions of $\calP$ later.
\begin{defn}
An $n$-qubit \emph{state} $\psi$ is a Hermitian, positive semidefinite operator on $(\bbC^2)^{\otimes n}$ satisfying $\trace(\psi) = 1$.
\end{defn}

We will be concerned with the problem of estimating the \emph{expectation value} of $O$ in the state $\psi$, defined by
\[
    \langle O \rangle = \trace(\psi O) = \sum_{P \in \calP} c_P \trace(\psi P).
\]

\subsection{Estimation through measurement}

We will focus only on estimation schemes that involve \emph{measurement}. A measurement of a Pauli operator $P$ in the state $\psi$ associates to $P$ a value in $\{-1, 1\}$, the eigenvalues of $P$, with probabilities given by the Born rule. It will be convenient to let $X_P$ denote the underlying $\{-1, 1\}$-valued random variable, so that in particular $\bbE [X_P] = \trace(\psi P)$. Note that the symbol $X_P$ is not related to the Pauli matrix $X$; we will use subscripts exclusively for the random variables.

It is possible to simultaneously measure a family of Pauli strings, as long as they pairwise commute. This produces a sample of several of the random variables; the variables are generally not independent, and their covariance is given by
\[
    \Cov(X_P, X_Q) = \trace(\psi PQ) - \trace(\psi P) \trace(\psi Q)
\]
for $P, Q \in \calP$.

The problem of measurement, or resource, optimization in the estimation of expectation values is the problem of producing a statistical estimator of $\langle O \rangle$ in which one is allowed a finite measurement budget of $M$ measurements, such that the estimator has a ``good'' accuracy; for instance, by requiring that the estimator produces an unbiased estimate and that it has low variance, or low squared error. Or, conversely, the problem of finding an estimator requiring a low number of measurements to achieve a given error with a given probability.

\subsection{Measurement circuits}
In the quantum circuit model, the simultaneous measurement of a family of commuting Pauli strings requires a circuit for preparing the state $\psi$, as well as a \emph{measurement circuit} which depends on the set of Pauli strings to be measured. For a general family of commuting Pauli strings, the measurement circuit can be taken to be a Clifford circuit; a circuit consisting of CNOT, $H$, and $S$ gates.

In practice, the cost of applying the circuit itself provides an important resource constraint. This means that if the resource cost of applying the measurement circuit is generally large compared to the cost of the state preparation, it can be desirable to try to restrict it to cases in which the measurement circuit itself requires only few resources. One important case is to restrict to sets of Pauli strings that qubit-wise commute:
\begin{defn}
Two Pauli strings $P = P_1 \cdots P_n$ and $Q = Q_1 \cdots Q_n$ are said to \emph{qubit-wise} commute if for every $i = 1, \dots, n$, $P_i$ commutes with $Q_i$.
\end{defn}
\begin{example}
    The Pauli strings $XX$ and $YY$ commute, but they do not qubit-wise commute since $X$ and $Y$ do not commute.
\end{example}
If the set of Pauli strings to be measured pair-wise qubit-wise commute, then the resulting measurement circuit can be taken to consist only of $H$ and $S$ gates, i.e. $O(n)$ 1-qubit gates. As CNOT gates represent a significant cost driver for current hardware, this simplification can be practically useful; the trade-off being that one is now generally restricted to measuring a smaller number of Pauli strings simultaneously, such that the total number of measurements required to reach a given error will a priori be larger. On the other hand, it is known that for any family of Clifford circuits, there are equivalent circuits with at most $O(n^2 / \log(n))$ CNOT gates \cite{Aaronson2004-ff} (and for the representations we will use, such Clifford circuits can be determined efficiently), meaning that if CNOT gates drive the cost, and if the number of CNOT gates required to prepare the state is, say, exponential, then an additional $O(n^2 / \log(n))$ gates may be considered relatively cheap.

In what follows, we will generally consider both the cases of qubit-wise commutation (QWC) and full commutation (FC), thus making clear what the trade-off will be.

\section{Benchmark: Electronic structure problem}
\label{sec:benchmark}

Throughout this work, we will motivate variance reduction methods and compare them to existing results in the literature by relying on a well-established benchmark based on molecular Hamiltonians of small molecules, and the problem of estimating their ground state energies, assuming that the ground state is prepared; that is, in these instances, $O = H$ is the Hamiltonian of the system, and $\psi$ denotes the ground state. We define all molecular configurations in the minimal basis set (STO-3G), with the exception of \ce{H2} which is also defined in the 6-31G basis. For each, we use the Jordan--Wigner mapping to encode the molecular Hamiltonian into a qubit Hamiltonian, and end up with the following systems: \ce{H2} (4 or 8  qubits), \ce{LiH} (12 qubits), \ce{BeH2} (14 qubits), \ce{H2O} (14 qubits) and \ce{NH3} (16 qubits). That allows us to increase the system size while still being able to obtain the exact ground state solution, i.e. an eigenvector of minimal eigenvalue, numerically. When comparing to existing results, we make sure, when possible, to use the exact same form of the Hamiltonian: The exact same coefficients and, to the extent it matters for a particular method, the exact same ordering of the Pauli strings; specifically, we rely on the Hamiltonians as provided in the supplementary material for \cite{Hadfield2022-vc}.

\begin{example}
    The smallest benchmark example, \ce{H2} in the STO-3G basis, is given, with three decimals of precision for each Pauli string, by
    \begin{align*}
        O = &-0.811\ IIII + 0.172\ ZIII -0.226\ IZII + 0.172\ IIZI -0.226\ IIIZ + 0.121\ ZZII \\
            &+ 0.0452\ YYXX + 0.0452\ YYYY + 0.0452\ XXXX + 0.0452\ XXYY \\
            &+ 0.169\ ZIZI + 0.166\ ZIIZ + 0.166\ IZZI + 0.175\ IZIZ + 0.121\ IIZZ
    \end{align*}
    From here, we may already make a few notes:
    \begin{itemize}
        \item  The coefficient for $IIII$ has the largest magnitude; this will play a role in variance reduction as $IIII$ also has the property that it commutes with all other Pauli strings, so while one can always include $IIII$ in any simultaneous measurement, doing so is irrelevant, as any measurement outcome of $IIII$ is $+1$. In other words, $\Var(X_{IIII}) = 0$ for any choice of state $\psi$.
        \item The distinction between qubit-wise and full commutation can matter in this example: The Pauli strings $\{YYXX, YYYY, XXXX, XXYY\}$ pairwise commute, but they do not pairwise qubit-wise commute.
    \end{itemize}
\end{example}

\section{Variance reduction}
\label{sec:var-reduction}
Recall that our goal is to define, for a fixed measurement budget $M$, an estimator of the expectation value $\langle O \rangle = \sum_{P \in \calP} c_P \bbE[X_P]$, and that we have at our disposal a choice of $M$ measurements of certain subsets of compatible Paulis. In other words, for each measurement round $r = 1, \dots, M$, and each Pauli string $P \in \calP$, we get an output $\{-1, 0, 1\}$, where we associate to $P$ the value $0$ if $P$ is not measured in measurement round $r$.

Then, an $M$-measurement \emph{estimator} of $\langle O \rangle$ is a real-valued function
\begin{align*}
    \widehat{\langle O \rangle} : (\{-1, 0, 1\}^\calP)^M \to \bbR.
\end{align*}
Its \emph{bias} is $\Bias(\estO) = \bbE[\estO] - \expecO$, and its \emph{variance} is $\Var(\estO) = \bbE[(\estO - \bbE[\estO])^2]$. The estimator is \emph{unbiased} if $\bbE[\estO] = \expecO$.

In what follows, we will often allow the choice of which measurement to perform in round $r$ to itself be a random variable $I$ whose sample space is the finite set $2^\calP$ of possible collections of Pauli strings; the resulting estimator then also depends on the sampled measurement, so its bias and variance will depend on $I$, and we will use the notation $\bbE$ to both mean expectation over the quantum state, or over this additional source of randomness. A useful fact is the law of total variance, which expresses the combined variance in terms of the purely quantum variance, i.e. the variance left over after conditioning on a particular collection of concrete measurements,
\[
    \Var(\widehat{\langle O\rangle}) = \bbE[\Var(\estO \mid I)] + \Var(\bbE[\estO \mid I]).
\]

\begin{example}
    Suppose that in each measurement round, we measure only a single Pauli string, and suppose that we follow a deterministic measurement schedule where each Pauli string $P \in \calP$ is measured a total of $M_P$ times, so that $\sum_{P \in \calP} M_P = M$, and assume that $M_P > 0$ for all $P$ (which in particular means that $M \geq \lvert \calP \rvert$). Let $n_P^\pm$ denote the number of times $\pm 1$ is measured for $P \in \calP$, and define an estimator by
    \[
        \estO^{\mathrm{det},M} = \sum_{P \in \calP} c_P \frac{n^+_P - n^-_P}{M_P}.
    \]
    Then, $\bbE[(n^+_P - n^-_P)/M_P] = \bbE[X_P] = \expecP$ and $\bbE[\estO^{\mathrm{det},M}] = \expecO$, so the estimator is unbiased. Its variance is
    \[
        \Var(\estO^{\mathrm{det},M}) = \sum_{P \in \calP} \frac{c_P^2}{M_P} \Var(X_P).
    \]
    From here, we read that if the goal is to minimize the $\Var(\estO)$, we should choose $M_P$ in a way that takes into account both the the coefficient $c_P$, but also $\Var(X_P)$. Here, $\Var(X_P)$ is unknown a priori, but we know, for example, that when $P$ is the identity, $\Var(X_P) = 0$, and so in this case allocating any measurements to the identity is wasteful.
\end{example}
\begin{example}
    \label{ex:horvitz-thompson}
    Suppose again that in each measurement round, we measure only a single Pauli string, but suppose now that we do so by drawing measurements according to some probability distribution; in other words, assume that we are given a probability distribution $\pi : \calP \to \bbR$ such that $\pi_P > 0$ for all $P \in \calP$, and define $I$ such that $\Pr(I = \{P\}) = \pi_P$. As before, let $n_P^\pm$ denote the actual measurements of $P$. Then, we can use inverse probability weighting and define the \emph{Horvitz--Thompson estimator}
    \[
        \estO_\pi = \frac{1}{M} \sum_{P \in \calP} c_P\frac{n^+_P - n^-_P}{\pi_P},
    \]
    which is now well-defined for all $M > 0$. As before, this is an unbiased estimator, and later we will see how to tune $\pi$ so as to minimize its variance.
\end{example}

In what follows, we will generally use the notation $\estO^{\mathrm{det},M}$ to refer to $M$-trial estimators for which the measurements to perform are fixed a priori, and $\estO_\pi$ to those for which the measurements are sampled from a probability distribution $\pi$. As indicated in the introduction, other choices govern which measurements are allowed in the first place; such choices will be implicit in the notation and spelled out in the text where relevant.
\section{Stratified sampling and partitioning algorithms}
\label{sec:stratified-sampling}

The previous examples illustrate estimators defined by measuring individual Pauli strings. We have already touched upon the fact that in practice, we can measure several Pauli strings simultaneously. In this section, we will discuss algorithms for partitioning $\calP$ into a small number of simultaneously measurable Pauli strings. In the literature, this is commonly referred to as ``Pauli grouping'' and is related to the graph colouring problem. Here, we spell out this correspondence and discuss a number of algorithms for solving it.

\subsection{Pauli commutation graphs}

We begin by noting how one can construct, for any such collection of Pauli strings, an undirected graph, which we call the \emph{Pauli commutation graph}, such that the group minimization problem becomes a minimum clique cover problem of that graph. To be able to handle both the cases of qubit-wise and full commutation, we will be introducing two such graphs, which we will refer to as the \emph{qubit-wise commutation graph} $G^{\QWC} = (V, E^{\QWC})$ and \emph{full commutation graph} $G^{\FC} = (V, E^{\FC})$ respectively.

In each case, the set of vertices is the set of all Pauli strings, $V = \calP$. The edges correspond to the type of commutativity, that is, $(P, Q) \in E^{\QWC}$ if $P$ and $Q$ qubit-wise commute, and $(P, Q) \in E^{\FC}$ if $P$ and $Q$ commute. We first note that adjacency can be checked efficiently.
\begin{prop}
    Let $P = P_1 \cdots P_n$ and $Q = Q_1 \cdots Q_n$, where $P_i, Q_i \in \{I, X, Y, Z\}$. Let
    \[
        \delta_i = \begin{cases}
        0, & \text{if $P_i = Q_i$ or $P_i = I$ or $Q_i = I$},\\
        1, & \text{otherwise,}
        \end{cases}
    \]
    and let $\delta = \sum_{i=1}^n \delta_i$. Then $(P, Q) \in E^{\QWC}$ if and only if $\delta = 0$, and $(P, Q) \in E^{\FC}$ if and only if $\delta$ is even.
\end{prop}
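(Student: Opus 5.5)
The plan is to reduce both claims to a statement about single-qubit Pauli matrices and then use the tensor-product structure. First I will record the one-qubit fact: for $A, B \in \{I, X, Y, Z\}$, either $AB = BA$ or $AB = -BA$. Moreover, $AB = BA$ holds exactly when $A = B$, $A = I$, or $B = I$. This is a finite check. Any matrix commutes with itself and with $I$. The relations $XY = iZ = -YX$, $YZ = iX = -ZY$, and $ZX = iY = -XZ$ show that distinct non-identity Paulis anticommute. In the notation of the proposition, this says $P_iQ_i = (-1)^{\delta_i} Q_iP_i$ for each $i$.

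The qubit-wise statement then follows at once from the definition. $P$ and $Q$ qubit-wise commute if and only if $P_i$ commutes with $Q_i$ for every $i$. By the one-qubit fact, this holds if and only if $\delta_i = 0$ for every $i$. Since each $\delta_i$ is nonnegative, this is equivalent to $\delta = 0$.

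For full commutation, I will use the mixed-product property of the tensor product:
\[
PQ = (P_1Q_1) \otimes \cdots \otimes (P_nQ_n) = \Bigl(\prod_{i=1}^n (-1)^{\delta_i}\Bigr) (Q_1P_1) \otimes \cdots \otimes (Q_nP_n) = (-1)^{\delta} QP.
\]
If $\delta$ is even, this gives $PQ = QP$. If $\delta$ is odd, it gives $PQ = -QP$. In that case $PQ \neq QP$, because $PQ = QP = -QP$ would force $PQ = 0$, whereas $PQ$ is invertible as a tensor product of unitaries.

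The only step needing any care is this final one: ruling out that an anticommuting pair could also commute, which is settled by the invertibility argument. The rest is a routine finite verification.
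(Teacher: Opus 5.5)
Your proof is correct and complete. The paper states this proposition without proof and treats it as a standard fact. Your argument is the standard justification it implicitly relies on: the single-qubit relations $P_iQ_i = (-1)^{\delta_i}Q_iP_i$, the mixed-product property giving $PQ = (-1)^{\delta}QP$, and invertibility of $PQ$ to rule out an anticommuting pair also commuting.

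One small point: the ``only if'' direction of your one-qubit fact, and hence of the qubit-wise claim, also needs anticommuting to imply non-commuting. Your invertibility remark covers this; you could cite it there too, or simply note $XY = iZ \neq -iZ = YX$ and its cyclic analogues.
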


Note that any edge in the qubit-wise commutation graph is also an edge in the full commutation graph; the condition that two Pauli strings qubit-wise commute is stricter than only requiring them to commute.

The point of this construction is that a collection of Pauli strings can be simultaneously measured (resp. simultaneously measured with single-qubit measurement circuits) if and only if they form a \emph{clique} in $G^{\FC}$ (resp. $G^{\QWC})$.

To partition all Pauli strings into simultaneously measurable parts is therefore the same problem as finding a collection of cliques that cover all vertices of the graph. Finding a minimal such cover is called the minimal clique cover problem. We should note that a clique cover need not be a partition: cliques can overlap, and this will play an important role later.

\begin{rem}
    For an arbitrary undirected graph $G$, the smallest number of cliques necessary to cover the vertices of the graph is called its \emph{clique covering number} and is often denoted $\theta(G)$. Let $G^c$ denote the complement graph whose vertices are the vertices of $G$, and for which there is an edge between a pair of vertices if there is not an edge between those vertices in $G$. Then it is well-known that $\theta(G) = \chi(G^c)$, the \emph{chromatic number} of $G^c$, defined as the smallest number of colours by which one can colour the vertices of $G^c$ such that no two neighbouring vertices are coloured by the same colour. The translation is immediate: for a clique in a clique cover in $G$, colour all its vertices with the same colour to obtain a colouring by $\theta(G)$ colours in $G^c$. For this reason, in software packages for grouping Pauli strings one will often find that the problem is treated as a colouring problem in a non-commutation graph, rather than a clique cover problem in the commutation graph.
\end{rem}

Having the problem phrased as a well-known combinatorial optimization problem means that one can find a number of algorithms for solving the problem off-the-shelf, including approximation algorithms with known approximation ratios, i.e. proven guarantees on how far from optimality one will be.

\subsection{Algorithms for finding clique covers}

With the characterization of the grouping problem as a minimum clique cover problem in place, we can start pulling algorithms from the shelf. We will focus on four different algorithms, the first three of which assume that the set of maximal cliques is available to us. A \emph{maximal clique} in a graph is a clique such that there is no vertex that can be added to the clique to form a larger clique. Note that a priori, maximal cliques can have different cardinalities.

Throughout, let $G = (V, E)$ be a general graph, and let $\mathcal{C}$ denote the set of all maximal cliques in $G$.

\subsubsection{Integer programming}
\label{sec:ilp}
The minimum clique cover problem can be phrased as a simple integer program, which in fact solves the more general set cover problem. For every maximal clique $C \in \mathcal{C}$, we introduce a binary variable $x_C \in \{0, 1\}$, which takes the value $1$ if the clique $C$ is to be part of the cover. That is, at the end of the day, the goal will be to minimize $\sum_{C \in \mathcal{C}} x_C$. That the selection of cliques corresponds to a cover can be encoded by requiring that for every vertex $v \in V$, we must have $\sum_{\{C \in \mathcal{C} \mid v \in C\}} x_C \geq 1$; at least one of the cliques containing $v$ must be included. With this, any off-the-shelf integer programming solver can be used to produce minimal covers. In general, set cover is NP-hard, and from a practical point of view, as we will see below, only instances corresponding to very small graphs can be solved to optimality, but we note that solvers generally include options for halting when a given cover size is reached, or when a solution is known to have a given optimality gap.

\subsubsection{Greedy covering algorithms}
\label{sec:gsc}
Once $\mathcal{C}$ is known, a simple greedy algorithm for constructing covers is to start with an empty collection of cliques, then iteratively choose a clique $C \in \mathcal{C}$ which contains the largest number of vertices not already contained in a clique chosen earlier. This will produce a cover since in the worst case, we end up choosing all cliques.

It is well-known that the resulting cover will be at most a factor of $\sum_{k=1}^{\lvert V \rvert} \frac{1}{k}$ larger than a minimal cover.

\subsubsection{Column generation}
\label{sec:cg}
While the greedy algorithm above can be seen to be essentially the best polynomial time approximation algorithm for the general set cover problem, each iteration involves iterating over all maximal cliques, which itself can be prohibitively expensive in practice.

For example, if the observable is the molecular Hamiltonian such that $\lvert V \rvert$ will be of order $n^4$, then by the Moon--Moser theorem, the number of cliques can be up to $3^{n^4/3}$.

In general, when the number of variables in an integer program like the above grows exponentially with problem size, one can look towards column generation methods: methods that iteratively grow the set of variables at the expense of having to solve the problem several times.

Concretely, start with some clique cover $\mathcal{C}_0$, which could be a trivial solution like using all singletons, or a solution such as the one described in the section below which makes use of the structure of the Pauli graph and gives a better cover than the trivial one without also having to construct $\mathcal{C}$, and assume that we have constructed a set $\mathcal{C}_i$ of cliques. Then, we define $\mathcal{C}_{i+1} = \mathcal{C}_i \cup \{ C\}$, where $C$ is the clique defined as follows: First, solve the linear relaxation of the integer program from Section~\ref{sec:ilp} using $\mathcal{C}_i$ instead of $\mathcal{C}$, and allowing $x_{\tilde{C}} \in [0, 1]$ for all $\tilde{C} \in \mathcal{C}_i$. For the solution, calculate the marginals with respect to the inequality constraint; there is one constraint for every vertex $v \in V$, so each marginal is a real number $w_v > 0$. Use these weights to define a maximum weight clique problem, and let $C$ be a solution to this problem; that is, $C \in \mathcal{C}$ is a clique such that $\sum_{v \in C} w_v$ is maximal. This process continues until the new clique $C$ found this way already belongs to $\mathcal{C}_i$, at which point we solve the original integer linear program for the resulting collection of cliques.

The motivation for using the maximum weight clique problem to choose which cliques should enter the problem is that if the $x_C$ were continuous variables rather than binary, then this process would converge to an optimal solution of the corresponding linear program (and this is the setup usually considered when talking about column generation). In our case, we get no such guarantee, but we note that there are extensions that can guarantee optimality, most notably the branch and price algorithm in which one solves not the integer program in each step, but its linear relaxation, at the cost of getting solutions to the problem with non-integral values of the $x_C$, which are then used as branching variables. As such, the algorithm we consider may simply be thought of as branch and price with no branching.

There is a large body of literature on using branch and price to solve the maximum colouring problem, which as we have seen is equivalent to the clique cover problem, as well as variants like branch-price-and-cut, and in particular, on coming up with good branching rules; one classical set of rules that applies in our case are those of Ryan and Foster \cite{Foster1976-oa}. Based on the results we see below, we suspect that there is room for the application of this family of algorithms to Hamiltonians of interest here, but we make no such attempt in this study.

\subsubsection{Maximum weight clique solver}
\label{sec:gcg}
Note that the maximum weight clique problem is itself NP-hard, so that rather than having to solve one NP-hard problem, we now have to solve several. We find that depending on the size of the problem, this may not be practically feasible. 

For the examples below, we will consider two variants. One uses a variant of the branch and bound algorithm from \cite{bb-maxclique} as implemented in NetworkX \cite{networkx}. The other instead uses a greedy maximum weight solver, in which one iteratively picks the vertex with the highest value $f(v)$ for which there is an edge to all previously picked vertices. Here, the value of $f$ should somehow take into account the weight of the vertex, as well as its surrounding vertices. In our experiments, we define $f(v) = d(v) w(v)$, where $d(v)$ is the degree of $v$, and $w(v)$ is its weight. The quality of the resulting algorithm will of course depend on this choice of $f$, and we leave it as an open problem to investigate the usefulness of other heuristics.

\subsection{Greedy algorithms without constructing cliques}
\label{sec:gcn}
We round off our collection of algorithms by noting that it is also possible to greedily construct clique covers without having to first find all maximal cliques: first, choose some ordering of the vertices, $V = (v_1, v_2, \dots)$. We will now iterate over the vertices, and build up cliques as we go. Concretely, assume that the vertices $v_1, \dots, v_{i-1}$ have been associated to cliques $C_1, \dots, C_m$. To handle $v_i$, if there is a clique $C_j$ such that $C_j \cup \{v_i\}$ is a clique, then choose the first such clique $C_j$ and replace $C_j$ with $C_j \cup \{v_i\}$. If no such clique exists, introduce a new clique $C_{m+1} = \{v_i\}$.

As remarked above, one way to think about this approach is to realize that it corresponds to the natural greedy algorithm for the colouring problem on the complement graph $G^c$.

The resulting clique cover will depend on the choice of ordering of the vertices and the resulting cliques, and some heuristics could in principle be better than others; for example, one commonly encountered heuristic is to order the vertices such that those with lowest degree are handled first; note that this approach is commonly called ``\emph{largest} degree first (LDF)'' in the literature, when the focus is on solving the colouring problem in $G^c$. Numerous such heuristics for coming up with orderings have been studied \cite{Verteletskyi2020-qz}, and indeed this particular class of algorithms is generally used as a baseline for comparisons in studies on measurement optimization. Another heuristic we will consider below is ``recursive largest first (RLF)'' which, like LDF, starts a new clique $C_j$ by choosing the vertex with the largest degree in $G^c$, but subsequently and iteratively extends $C_j$ by choosing a vertex having the maximal number of neighbours in $G^c$ that are adjacent in $G^c$ to at least one element of $C_j$. \cite{Leighton1979-zw}

Note that it also avoids explicitly constructing the graph $G$, meaning that it is useful even in cases where the graph $G$ is too large to be stored in memory.

In all of the above algorithms, we have only used the commutativity structure of the Pauli strings. It is well established that insofar the aim is to estimate expectation values with low error, one should try to incorporate any available information, such as the coefficients of the Pauli strings in the observable, into the construction of the measurement scheme, and we will see in Section~\ref{sec:importance-sampling} how that may look for the algorithms at hand.

\subsection{Benchmark cases}

We now turn to the application of the above algorithms on our 6-molecule benchmark set, as defined in Section~\ref{sec:benchmark}.

In Table~\ref{tab:sizes}, we show the sizes of the resulting graphs and clique cover problems. Here, we write $\mathcal{C}^\mathrm{QWC}$ and $\mathcal{C}^\mathrm{FC}$ for the collections of maximal cliques in $G^{\QWC}$ and $G^{\FC}$ respectively.

\begin{table}
    \begin{tabular}{l|r|r||r|r||r|r}
        System & $n$ & $\lvert V \rvert$ & $\lvert E^\mathrm{QWC} \rvert$ & $\lvert\mathcal{C}^\mathrm{QWC}\rvert$  & $\lvert E^\mathrm{FC} \rvert$ & $\lvert\mathcal{C}^\mathrm{FC}\rvert$ \\
        \hline
        \ce{H2} (STO-3G) & 4 & 15 & 59 & 5 & 89 & 2 \\
        \ce{H2} (6-31G) & 8 & 185 & 4682 & 617 & 9804 & 2340 \\
        \ce{LiH} & 12 & 631 & 61077 & 6169 & 122493 & $> 10^6$ \\
        \ce{BeH2} & 14 & 666 & 71397 & 22933 & 143037 & $> 10^6$ \\
        \ce{H2O} & 14 & 1086 & 162629 & 91929 & 375099 & $> 10^6$ \\
        \ce{NH3} & 16 & 3057 & 1132908 & $> 10^6$ & 3085736 & $> 10^6$ \\
    \end{tabular}
    \bigskip
    \caption{The number of qubits and Pauli strings for various systems, and the numbers of edges and maximal cliques in the corresponding commutation graphs. The number of maximal cliques can be found by iteratively creating the cliques, and we only list results for those which have less than $10^6$ maximal cliques.}
    \label{tab:sizes}
\end{table}

We note immediately that $G^\mathrm{QWC}$ has relatively few edges -- its adjacency matrix is relatively sparse -- whereas $G^\mathrm{FC}$ has relatively many edges. In practice this means that when it comes to implementation, in the full commutation case it can be beneficial to work on the complement graph. Note that having more than $10^6$ maximal cliques means that the integer program in Section~\ref{sec:ilp} has more than $10^6$ variables.

\subsection{Experimental setup}
We now compare results and running times for five different algorithms. In the following,
\begin{itemize}
    \item ILP refers the result of solving the \textbf{i}nteger \textbf{l}inear \textbf{p}rogram in Section~\ref{sec:ilp} using the open source HiGHS solver \cite{Huangfu2018-sw}.
    \item G-SC refers to a Rust implementation of the \textbf{g}reedy algorithm for \textbf{s}et \textbf{c}over from Section~\ref{sec:gsc}.
    \item CG refers to the \textbf{c}olumn \textbf{g}eneration/branchless branch-and-price algorithm from Section~\ref{sec:cg}, using HiGHS to solve the linear programs, and using a Rust port of the NetworkX implementation of a maximum weight clique problem solver. For the starting solution $\mathcal{C}_0$, we use the trivial solution $\{\{v\} \mid v \in V\}$, and we record the result obtained after running the clique finding algorithm run for 30 minutes, followed by a 30 minute window for solving the resulting ILP.
    \item RLF refers to a Rust implementation of the \textbf{r}ecursive \textbf{l}argest \textbf{f}irst algorithm described in Section~\ref{sec:gcn}.
    \item LDF refers to a Rust implementation of the \textbf{g}reedy algorithm from Section~\ref{sec:gcn} with vertices ordered by \textbf{l}argest \textbf{d}egree \textbf{f}irst (largest degree in $G^c$, i.e. smallest degree in $G$).
\end{itemize}
All algorithms are generally single-threaded and are executed on an Intel(R) Core(TM) Ultra 7 155H.

\subsection{Benchmark results}
In Tables~\ref{tab:num-groups-qwc} and \ref{tab:num-groups-fc}, we report the numbers of cliques in the covers found by each of the above described algorithms, for the qubit-wise and full commutation graphs respectively. Note that in all cases, we remove the identity Pauli, which, as we have noted earlier, has no variance; it also commutes with all other Pauli strings, thus has no impact on any of the grouping algorithms described here.

The lowest clique count for each instance is highlighted in bold. Blanks denote problems that were too large to produce results within a few hours.

\begin{table}
    \begin{tabular}{l|r||r|r|r|r|r}
        System & $\lvert V \rvert$ & ILP & G-SC & CG & RLF & LDF \\
        \hline
        \ce{H2} (STO-3G) & 14 & \textbf{5}$^{\mathrm{o}}$ & \textbf{5} & \textbf{5} & \textbf{5} & \textbf{5} \\
        \ce{H2} (6-31G) & 184 & \textbf{37}$^{\mathrm{o}}$ & 45 & 38 & 38 & 46 \\
        \ce{LiH} & 630 & \textbf{131}$^{\mathrm{o}}$ & 142 & 132 & 133 & 136 \\
        \ce{BeH2} & 665 & \textbf{136}$^{\mathrm{o}}$ & 152 & 138 & 140 & 140 \\
        \ce{H2O} & 1085 & \textbf{210}$^{\mathrm{o}}$ & 250 & 213 & 219 & 224 \\
        \ce{NH3} & 3056 & & & & \textbf{609} & 618 \\
    \end{tabular}
    \bigskip
    \caption{The number of cliques in the covers of the \emph{qubit-wise commutation graph} resulting from applying the various algorithms. For example, for \ce{H2}, there are 14 Pauli strings, but 5 groups suffice to cover all of them. Here, $^\mathrm{o}$ denotes cases where the given result is optimal}
    \label{tab:num-groups-qwc}
\end{table}

\begin{table}
    \begin{tabular}{l|r||r|r|r|r|r}
        System & $\lvert V \rvert$ & ILP & G-SC & CG & RLF & LDF \\
        \hline
        \ce{H2} (STO-3G) & 14 & \textbf{2}$^{\mathrm{o}}$ & \textbf{2} & \textbf{2} & \textbf{2} & \textbf{2} \\
        \ce{H2} (6-31G) & 184 & \textbf{8}$^{\mathrm{o}}$ & 11 & \textbf{8} & \textbf{8} & 9 \\
        \ce{LiH} & 630 &   &   & \textbf{21}$^{\mathrm{to}}$ & 29 & 38 \\
        \ce{BeH2} & 665 &  &  & \textbf{20}$^{\mathrm{to}}$ & 31 & 33 \\
        \ce{H2O} & 1085 &  &  & \textbf{33}$^{\mathrm{to}}$ & 41 & 48 \\
        \ce{NH3} & 3056 & & & & 96 & \textbf{93} \\
    \end{tabular}
    \bigskip
    \caption{The number of cliques in the covers of the \emph{full commutation graph} resulting from applying the various algorithms. For molecules other than \ce{H2}, the number of maximal cliques is large enough that we only get results for the methods that do not rely on constructing cliques. Here, $^\mathrm{o}$ denotes cases where the given result is optimal, and $^\mathrm{to}$ denotes cases where results were obtained but for which the 30+30 minute time limit was reached.}
    \label{tab:num-groups-fc}
\end{table}

A number of remarks are in order:
\begin{enumerate}
    \item Only in the smallest cases will the full ILP be solvable. This is particularly true for the full commutation case, where the number of cliques is relatively larger. Indeed, simply finding all cliques in order to construct the resulting ILP is hard. In particular, this means that for only a small handful of cases do we have known optimality results.
    \item Similarly, since G-SC relies on being able to iterate over all cliques, this also only produces results in a few cases; when results are available, they are not in general better than those obtained from the other greedy method, LDF. Moreover, we suspect that when cliques can be generated within the time frame of relevance, one will generally also be able to simply solve the ILP, leaving little room for practical relevance for G-SC.
    \item For the qubit-wise commutation case, we find that CG is unable to find maximum weight cliques for problems of sizes where finding all cliques is hard, but in the full commutation case, there are situations where CG runs to completion and produces results that are better than all the included alternatives; in particular, having $21$ groups for \ce{LiH} could be an improvement over the $38$ groups provided by the commonly applied LDF method.
    \item LDF and RLF are the only algorithms for which results are available in all cases, as they do not require the generation of cliques. Even though RLF can be seen as a refinement over LDF, as expected from \cite{Verteletskyi2020-qz}, we note that it generally performs about the same, and in some cases quite a bit worse; RLF is also more computationally demanding than LDF, meaning that there is little reason to prefer it over LDF in practice.
    \item As a more general remark, note that the group counts for the full commutation graph are much lower than for the qubit-wise commutation graph. This matches expectation: In general, one expects that for an electronic structure Hamiltonian with $n^4$ Pauli summands, the optimal groupings would contain $O(n^4)$ and $O(n^3)$ groups in the qubit-wise and full commutation cases, respectively \cite[Table 1]{Huggins2021-al}.
\end{enumerate}

\subsection{The growth of Clifford circuits}
Recall that one reason for preferring qubit-wise commutation is that the resulting measurement circuits consist of single-qubit gates, as opposed to the Clifford circuits that one generally needs when dealing with full commutation. It is known \cite{Aaronson2004-ff} that $O(n^2 /\log n)$ gates suffice in general, but we may also ask exactly how many gates we need for the benchmark cases.

In Table~\ref{tab:clifford-circuit-growth}, we note the number of CNOT gates needed to synthesize the relevant Clifford circuits for the LDF groupings, averaged over all groupings for each of the benchmark molecules. The circuits are obtained from Stim \cite{gidney2021stim}, and we also record the result of simplifying those circuits with TKET \cite{Sivarajah_TKET_A_Retargetable_2020}. While we do not necessarily expect to be able to extrapolate from the $6$ molecule benchmark, we note that a least-squares fit of $\alpha \tfrac{n^2}{\log n}$ to the average CNOT gate counts post-simplification gives $\alpha \approx 0.449$; see Figure~\ref{fig:clifford-circuit-growth}.

\begin{table}
    \begin{tabular}{l||r|r}
        System & Average CNOT count (Stim) & Average CNOT count (Stim+TKET) \\
        \hline
        \ce{H2} (STO-3G) & 4.0 & 3.5 \\
        \ce{H2} (6-31G) & 31.4 & 15.1 \\
        \ce{LiH} & 45.5 & 24.9 \\
        \ce{BeH2} & 56.3 & 31.8 \\
        \ce{H2O} & 56.7 & 31.6 \\
        \ce{NH3} & 76.0 & 44.5 \\
    \end{tabular}
    \bigskip
    \caption{The average number of CNOT gates used in Clifford measurement circuits.}
    \label{tab:clifford-circuit-growth}
\end{table}

\begin{figure}
    \centering
    \includegraphics[width=0.75\linewidth]{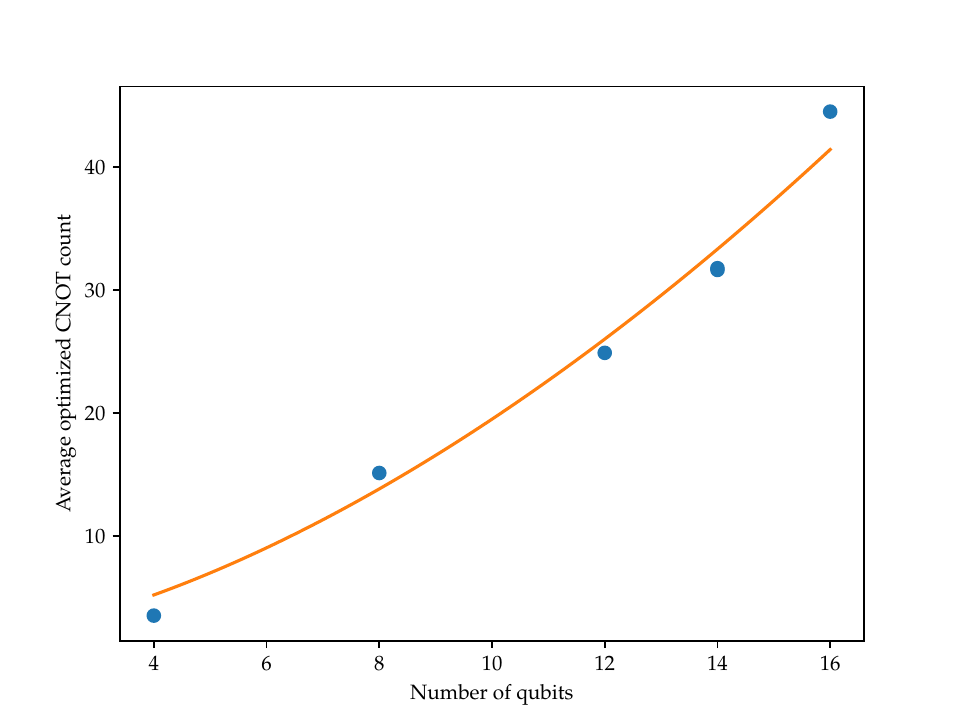}
    \caption{The average number of CNOT gates used in Clifford measurement circuits for the 6 benchmark molecules, together with the fit $n \mapsto 0.449 \tfrac{n^2}{\log n}$.}
    \label{fig:clifford-circuit-growth}
\end{figure}
\section{Importance sampling}
\label{sec:importance-sampling}
We have seen how partitioning methods allow us to gain more information in a single measurement by providing information about multiple Pauli strings at the same time, but we still need to decide how to use those possible measurements to estimate expectation values. In this section, we revisit the Horvitz--Thompson estimator from Example~\ref{ex:horvitz-thompson} in this context, and we show how a fixed collection of possible measurements, such as those provided by the partitioning algorithms in the previous sections, can be turned into such an estimator by letting the probabilities of performing a given measurement depend on the structure of the Hamiltonian. The derivations in this section are based on the Horvitz--Thompson estimator, but a similar story holds for the allocation of measurements in a deterministic estimator, which will be treated in Section~\ref{sec:derandomization-and-allocation}.

\subsection{Observable-aware sampling}

Some parts of the partition will be more useful than others, and we want to build any priors we have on usefulness into the estimator itself. Here, let us see how we may use the information about the observable itself to come up with a notion of usefulness.

\subsubsection{Inverse probability weighting for individual Pauli strings}
As a warm-up, let us momentarily forget that we were grouping together measurements of multiple Pauli strings and just consider the case where each measurement gives information about only one Pauli string (or, equivalently, say that we consider the trivial partition in which all parts have cardinality $1$). We may then associate to each Pauli string $P$ a weight $\pi_P > 0$ indicating how often it will be measured, and let us say that these weights are scaled relative to each other such that $\sum_P \pi_P = 1$. When post-processing the results of the measurement, to avoid biasing the resulting estimator, we need to take this reweighting into account; one way to do this is to scale the results from a single Pauli $P$ by $\frac{1}{\pi_P}$; we do what is often referred to as \emph{inverse probability weighting}. In other words, as our estimator of the expectation value for an $M$ sample experiment, we may take
\[
    \estO_\pi = \frac{1}{M} \sum_P  \frac{c_P}{\pi_P} \widehat{s_P},
\]
where $\widehat{s_P} = n_P^+ - n_P^-$ and $n_P^\pm$ is the number of times we observe $\pm 1$ for measurements of $P$; in other words, $\widehat{s_P}$ is an unbiased estimator of $M \trace(P\psi)$. The total estimator $\estO_\pi$ is also unbiased,
\[
    \mathbb{E}[\estO_\pi] = \sum_P c_P \mathbb{E}\left[\frac{\widehat{s_P}}{\pi_P M} \right] = \sum_P c_P \trace(P \psi) = \trace(\psi O),
\]
and we will want to choose $\pi_P$ such that we need as few measurements $M$ as possible, to get an estimate of a given quality. To this end, we focus on the one-shot variance of the estimator, i.e. the variance of the estimator obtained by setting $M = 1$. Note first that for $M = 1$, $X_P = n_P^+ - n_P^-$ is a random variable with values in $\{\pm 1\}$; let $\mu_P = \bbE[X_P] = \trace(P\psi)$ and $\sigma_P^2 = \Var(X_P)$. Let $I$ denote the random variable underlying the choice of Pauli, i.e. $I$ takes values in $\calP$, and $\Pr(I = P) = \pi_P$. For every $P \in \calP$,

\[
    \Var(\estO_\pi \mid I = P) = \Var\left(\frac{c_P}{\pi_P} \widehat{s_P} \right) = \frac{c_P^2}{\pi_P^2} \Var\left( \widehat{s_P} \right) = \frac{c_P^2}{\pi_P^2} \sigma_P^2,
\]
and so
\[
    \bbE[\Var(\estO_\pi \mid I)] = \sum_P \pi_P \Var(\estO_\pi \mid I = P) = \sum_P \frac{c_P^2}{\pi_P} \sigma_P^2.
\]

Similarly, for every $P \in \calP$, we have
\[
    \bbE[\estO_\pi \mid I = P] = \frac{c_P}{\pi_P} \mu_P,
\]
and, by letting $Y = \bbE[\estO_\pi \mid I]$, we have $\bbE[Y] = \sum_P \pi_P \bbE[\estO_\pi \mid I = P] = \sum_P c_P \mu_P$, and similarly $\bbE[Y^2] = \sum_P \pi_P \left(\frac{c_P\mu_P}{\pi_P}\right)^2 = \sum_P \frac{c_P^2}{\pi_P} \mu_P^2$, so it follows that
\begin{align*}
    \Var(\bbE[\estO_\pi \mid I]) &= \bbE[(\bbE[\estO_\pi \mid I])^2] - (\bbE[\bbE[\estO_\pi \mid I]])^2 = \sum_P \pi_P \left(\frac{c_P}{\pi_P}\mu_P\right)^2 - \left(\sum_P \pi_P \frac{c_P}{\pi_P} \mu_P\right)^2 \\
     &= \sum_P \frac{c_P^2}{\pi_P} \mu_P^2 - \left(\sum_P c_P \mu_P\right)^2.
\end{align*}
By the law of total variance, we have
\begin{align*}
    \Var(\estO_\pi) &= \bbE [\Var(\estO_\pi \mid I) ] + \Var(\bbE[\estO_\pi \mid I]) \\
     &= \sum_P \frac{c_P^2}{\pi_P} \sigma_P^2 + \sum_P \frac{c_P^2}{\pi_P} \mu_P^2 - \left(\sum_P c_P \mu_P\right)^2.
\end{align*}
Now $\sigma_P^2 + \mu_P^2 = \bbE[X_P^2] = 1$ since $X_P$ takes values in $\{-1, 1\}$, so that we end up with
\[
    \Var(\estO_\pi) = \sum_P \frac{c_P^2}{\pi_P} - \trace(\psi O)^2.
\]
Note that in particular, we may find the values of $\pi_P$ that minimize $\Var(\estO_\pi)$ despite not knowing the value of $\trace(\psi O)^2$ a priori: Define the Lagrangian $\mathcal{L} : \bbR_{>0}^{n_p} \times \bbR \to \bbR$ by
\[
    \mathcal{L}(\pi, \lambda) = \sum_P \frac{c_P^2}{\pi_P} + \lambda\left( \sum_P \pi_P - 1 \right).
\]
That $0 = \frac{\partial \mathcal{L}}{\partial \pi_P} = - c_P^2/\pi_P^2 + \lambda$ for all $P$ implies that $\pi_P = \sqrt{c_P^2/\lambda}$ for all $P$. In other words, $\pi_P$ is proportional to $\lvert c_P \rvert$ for all $P$, so the minimal posterior variance is obtained when
\[
    \pi_P = \frac{\lvert c_P \rvert}{\sum_P \lvert{c_P}\rvert}.
\]
Using these values of $\pi_P$, the resulting estimator is what is often referred to in the literature as the $\ell^1$ estimator (see e.g. \cite{Hadfield2022-vc}), named as such since the $\ell^1$ norm of the group -- in this case just a singleton -- is $\lvert c_P \rvert$. Let us immediately note that as far as singleton groups are concerned, since the $\ell^p$ norms on $\mathbb{R}$ agree for all $p > 0$, up until now there is no particular reason to favor $p = 1$.

\subsubsection{Inverse probability weighting for groups of Pauli strings}
\label{sec:inv-prob-groups}
Indeed, let us now proceed to perform the same analysis for partitions whose parts may consist of more than one Pauli and see that under the non-informative prior, weighting parts by their $\ell^2$ norms is optimal.

To this end, assume that we have partitioned our set of Pauli strings into a set of $n_G$ disjoint groups $\mathcal{G}$. Each group corresponds to a possible measurement, and as above, we want to associate to each group $G \in \mathcal{G}$ a probability $\pi_G > 0$ of performing the corresponding measurement. Such a measurement provides information about all the Pauli strings in the group, so just as before, we may use $\widehat{s_P} = n_P^+ - n_P^-$ as an estimator of the total of an individual Pauli string, and combine these into an estimator of the expectation value by
\[
    \estO_\pi = \frac{1}{M} \sum_{G \in \mathcal{G}} \frac{1}{\pi_G} \sum_{P \in G} c_P \widehat{s_P}.
\]
Note that the probability of sampling any particular Pauli $P$ is the probability $\pi_G$ of the group it belongs to. Therefore, as above, the estimator is unbiased,
\[
    \mathbb{E}[\estO_\pi] = \sum_{G \in \mathcal{G}} \sum_{P \in G} \mathbb{E}\left[\frac{c_P}{M\pi_G} \widehat{s_P} \right] = \sum_P c_P \trace(\psi P) = \trace(\psi O).
\]
In the above, we saw how the one-shot variance of the estimator could be expressed in terms of the variance of the individual Pauli strings; when grouping, we also need to take into account the potential covariance between Pauli strings within each group. Concretely, suppose again that $M = 1$, and write $X_P = n_P^+ - n_P^-$ be the $\{\pm 1\}$-valued variable for each Pauli string, and let $\mu_P = \bbE[X_P]$, and $\Sigma_{PQ} = \Cov(X_P, X_Q) = \trace(\psi P Q) - \trace(\psi P) \trace(\psi Q)$. As before, let $I$ denote the variable determining which group to measure, such that $\mathrm{Pr}(I = G) = \pi_G$. Then for each $G \in \mathcal G$, we have
\[
    \bbE[\estO_\pi \mid I = G] = \frac{1}{\pi_G} \sum_{P \in G} c_P \mu_P, \quad \Var(\estO_\pi \mid I = G) = \frac{1}{\pi_G^2} \sum_{P, Q \in G} c_P c_Q \Sigma_{PQ},
\]
and
\begin{align*}
    \Var(\bbE[\estO_\pi \mid I]) &= \sum_{G \in \mathcal G} \pi_G\left(\frac{1}{\pi_G}\sum_{P \in G} c_P \mu_P\right)^2 - \left(\sum_{G \in \mathcal G} \pi_G \frac{1}{\pi_G} \sum_{P \in G} c_P \mu_P \right)^2 \\
     &= \sum_{G \in \mathcal G} \frac{1}{\pi_G} \left(\sum_{P \in G} c_P\mu_P\right)^2 - \left(\sum_{G \in \mathcal{G}} \sum_{P \in G} c_P \mu_P\right)^2 \\
     &= \sum_{G \in \mathcal G} \frac{1}{\pi_G} \left(\sum_{P \in G} c_P\mu_P\right)^2 - \trace(\psi O)^2.
\end{align*}
By the law of total variance,
\begin{align*}
    \Var(\estO_\pi) &= \sum_{G \in \mathcal{G}} \frac{1}{\pi_G} \sum_{P, Q \in G} c_P c_Q \Sigma_{PQ} + \sum_{G \in \mathcal G} \frac{1}{\pi_G} \left(\sum_{P \in G} c_P\mu_P\right)^2 - \trace(\psi O)^2 \\
    &= \sum_{G \in \mathcal{G}} \frac{1}{\pi_G} \sum_{P, Q \in G} c_P c_Q (\Sigma_{PQ} +\mu_P\mu_Q) - \trace(\psi O)^2 \\
    &= \sum_{G \in \mathcal{G}} \frac{1}{\pi_G} \sum_{P, Q \in G} c_P c_Q \trace(\psi P Q) - \trace(\psi O)^2.
\end{align*}
Note that for any $G \in \mathcal G$, we have
\[
    \bbE\left[\left(\sum_{P \in G} c_P X_P\right)^2\right] = \sum_{P,Q \in G} c_Pc_Q(\Sigma_{PQ} + \mu_P\mu_Q),
\]
so we can also write
\[
    \Var(\estO_\pi) = \sum_{G \in \mathcal{G}} \frac{1}{\pi_G} \bbE\left[\left(\sum_{P \in G} c_P X_P\right)^2\right] - \trace(O\psi)^2.
\]
Unlike in the previous section, a variance-minimizing choice of $\{\pi_G\}_{G \in \mathcal G}$ will now depend on the state $\psi$, which is unknown a priori. To proceed we may assume the same non-informative prior on all individual Paulis, such that the expected covariance between distinct Pauli strings vanishes, and their variances are proportional. Under this assumption, up to an offset of $\trace(O\psi)^2$, and an overall factor given by the individual variances, the posterior variance becomes
\[
    \sum_{G \in \mathcal{G}} \frac{1}{\pi_G} \sum_{P,Q \in G} c_P c_Q \delta_{P,Q} = \sum_{G \in \mathcal{G}} \frac{1}{\pi_G} \sum_{P \in G} c_P^2.
\]
Define $\mathcal{L} : \bbR_{> 0}^{n_G} \times \bbR \to \bbR$ by
\[
    \mathcal{L}(\pi, \lambda) = \sum_{G\in\mathcal{G}} \frac{1}{\pi_G} \sum_{P \in G} c_P^2 + \lambda\left( \sum_{G \in \mathcal G} \pi_G - 1 \right).
\]
Now, for any group $G\in\mathcal{G}$, $0 = \frac{\partial L}{\partial \pi_G} = -\frac{1}{\pi_G^2} \sum_{P \in G} c_P^2 + \lambda$ implies that $\pi_G$ is proportional to $\sqrt{\sum_{P \in G} c_P^2}$, and that the posterior variance is minimized when
\begin{align}
    \label{eq:opt-non-overlapping}
    \pi_G = \frac{\sqrt{\sum_{P \in G} c_P^2}}{\sum_{G' \in \mathcal{G}} \sqrt{\sum_{P \in G'} c_P^2}}.
\end{align}

In other words, for disjoint groups, having probabilities be proportional to the $\ell^2$ norm of the groups is the optimal choice under the non-informative prior. Note that this is generally different from the distributions obtained based on the $\ell^1$ norm that has been used in some previous work. In Table~\ref{tab:compare-l1-and-l2} we give an indication of what this can mean in practice, by comparing the variances of the Horvitz--Thompson ground state energy estimators obtained by using the LDF groupings (see Section~\ref{sec:gcn}) for each of the 6 benchmark molecules in the qubit-wise case, for three different probability distributions; those obtained by taking $\pi_G$ to be proportional to $1$, to $\sum_{P \in G}\lvert c_P\rvert$, and to $\sqrt{\sum_{P \in G} c_P^2}$ respectively. We note that as expected, $\ell^2$ is superior in most cases. Note also that the results for $\ell^1$ exactly matches those reported in \cite{Hadfield2021-hx}.

\begin{table}
    \begin{tabular}{l||r|r|r}
        System & Uniform & $\ell^1$ & $\ell^2$ \\
        \hline
        \ce{H2} (STO-3G) & 4.17 & \textbf{0.402} & 0.424 \\
        \ce{H2} (6-31G) & 197 & 22.3 & \textbf{21.0} \\
        \ce{LiH} & 576 & 54.2 & \textbf{46.6} \\
        \ce{BeH2} & 2470 & 135.0 & \textbf{117} \\
        \ce{H2O} & 66000 & 1040 & \textbf{921} \\
        \ce{NH3} & 96700 & 891 & \textbf{732} \\
    \end{tabular}
    \bigskip
    \caption{The variances of the Horvitz--Thompson estimators of LDF grouped Hamiltonians using different probability distributions for the groups. All values are in units of Ha$^2$.}
    \label{tab:compare-l1-and-l2}
\end{table}

\subsection{Variance-aware colouring algorithms}
\label{sec:variance-aware-colouring}
We have seen that LDF colouring provides a fast and simple heuristic for producing clique covers of the Pauli commutation graphs. It does so, however, in a way that is oblivious to the fact that we will want to minimize variances at the end of the day. Therefore we may ask if there are simple greedy colouring algorithms that keep the end goal in mind.

To this end, we will want at each iteration to be able to decide if adding a Pauli string to a clique will reduce the resulting variance of the associated Horvitz--Thompson estimator. We can achieve this by defining, for a collection $\{G_1, \dots, G_m\}$ of cliques a heuristic $h(\{G_1, \dots, G_m\}) = \Var(\estO_\pi)$, where $\pi$ is given by \eqref{eq:opt-non-overlapping}.

Based on this, we will consider two variations on LDF; in both cases, assume that the Pauli strings have been ordered $P_1, \dots, P_k$ by degree. We iteratively add a Pauli string to a clique, so at each iteration, we will have a set $\{G_1, \dots, G_m\}$ of cliques.
\begin{itemize}
    \item In ``Lowest Degree, then Variance, First'' (LDVF), assume that the Pauli strings $P_1, \dots, P_{i-1}$ have already been grouped into cliques. To include $P_i$, evaluate for every $j = 1, \dots, m$, the heuristic $h(\{G_1, \dots, G_{j-1}, G_j \cup \{P_i\}, G_{j+1}, \dots, G_m\})$ as well as $h(\{G_1, \dots, G_m, \{P_i\}\})$, i.e. the estimated variance obtained by adding $P_i$ to one of the existing cliques, or to its own new clique. Use the set of cliques leading to the lowest estimated variance. In the case of a draw, use the grouping that was evaluated first.
    \item In ``Lowest Variance First'' (LVF), assume that some subset $\calP' \subseteq \calP$ has been grouped. Then for every $i = 1, \dots, k$, if $P_i$ is not already in $\calP'$, evaluate for every $j=1, \dots, m$ the estimated variance by adding $P_i$ to one of the cliques, or to its own new clique, as above. Keep the set of cliques leading to the lowest estimated variance, once again with draws being broken by the order in which the variances have been evaluated.
\end{itemize}
In other words, LDVF handles the vertices in the same order as LDF, whereas LVF spends some additional effort by trying all remaining Pauli strings in each iteration.

In Table~\ref{tab:compare-ldf-ldvf-lvf}, we provide a comparison of these three methods. Concretely, the table lists the variances of the Horvitz--Thompson estimators obtained by using the groupings from each of the three methods, with probabilities given by $\ell^2$ scaling \eqref{eq:opt-non-overlapping} for both the QWC and FC graphs. We note that LDVF or LVF outperform LDF in almost all cases.
\begin{table}[h!]
    \centering
    \begin{tabular}{l||r|r|r||r|r|r}
        \multicolumn{1}{c||}{} & 
        \multicolumn{3}{c||}{QWC} & 
        \multicolumn{3}{c}{FC} \\
        System & LDF & LDVF & LVF & LDF & LDVF & LVF \\
        \hline
        \ce{H2} (STO-3G) & \textbf{0.424} & \textbf{0.424} & \textbf{0.424} & \textbf{0.352} & \textbf{0.352} & 0.364 \\
        \ce{H2} (6-31G) & 21.0 & \textbf{15.8} & 26.4 & 14.8 & 9.12 & \textbf{5.89} \\
        \ce{LiH} & 46.6 & \textbf{20.2} & 34.5 & \textbf{18.4} & 18.5 & 23.1 \\
        \ce{BeH2} & 117 & \textbf{46.5} & 69.1 & 102 & 75.9 & \textbf{74.3} \\
        \ce{H2O} & 921 & \textbf{450} & 590 & 1070 & 611 & \textbf{430} \\
        \ce{NH3} & 732 & 570 & \textbf{503} & 625 & \textbf{315} & 470 \\
    \end{tabular}
    \bigskip
    \caption{Comparison of LDF, LDVF, and LVF in terms of the variances of the Horvitz--Thompson estimators of the grouped Hamiltonians. All values are in units of Ha$^2$.}
    \label{tab:compare-ldf-ldvf-lvf}
\end{table}

\subsection{Classical shadow and its variants}

At this point, we could return to the clique covers of Section~\ref{sec:stratified-sampling} and evaluate the variances of the estimators obtained by using them in randomized measurement schemes. However, above, we assumed that the groups of Pauli strings formed a partition, a cover by disjoint cliques, and whereas algorithms like those based on colouring the complement graph do produce partitions, some of the clique cover algorithms, e.g. the one based on integer linear programming, will produce covers by overlapping cliques.

One can always turn a clique cover of overlapping cliques into one of the same cardinality that uses only disjoint cliques by selecting, for each Pauli belong to more than one clique, one of those cliques at random. But recall that our goal is ultimately to use the cliques to estimate expectation values, so by trimming the cliques we are effectively reducing the amount of information we gain about each Pauli, which is counterproductive. For example, if $\calP = \{XY, XI, XZ\}$, a possible partition is $\{\{XY, XI\}, \{XZ\}\}$, but in measuring $XZ$ it is possible to also gain information about $XI$.

In Section~\ref{sec:maximalization} below, we turn this around and instead look at ways to turn clique covers into clique covers by maximal cliques.

It may be useful to contrast the clique cover approach with some of the other approaches to measurement optimization. In the clique cover algorithms, we are first and foremost concerned with grouping Pauli strings together, and then only later will we consider what the measurement circuit corresponding to a given group should be, and how to sample those measurement circuits. This is in contract, for instance, to the approach in the classical shadow \cite{Huang2020-ws} algorithm in which one samples not groups of Pauli strings, but measurement circuits themselves, and once a measurement circuit is chosen, one can look at which Pauli strings it provides information about; and by doing so, one ensures automatically that the resulting clique of Pauli strings is maximal. Now in classical shadow, the aim is to simultaneously learn about very many observables, but once a fixed observable, or a fixed set of observables, is known, one can look at algorithms for producing measurement circuits that play well with those observables. This is the approach taken in e.g. \cite{Hadfield2021-hx,Hadfield2022-vc,Shlosberg2023-ok,Gresch2025-zy}. 
\section{Maximalization: From partitions to overlapping groups}
\label{sec:maximalization}

So it is clear that focusing on algorithms that natively produce maximal cliques should be a more natural fit for the purposes of estimating expectation values. Nevertheless, one may attempt to convert non-maximal cliques into maximal ones to see if that leads to equally useful methods in practice. In this section, we do just that, but to begin with, we investigate what impact this may have on the optimal choice of measurement circuit probability distribution: previously, if a particularly important Pauli string belongs to a given group, it would make sense to measure that group with a correspondingly higher frequency, but once the Pauli string belongs to more than one group, its impact on the probability would have to be spread out accordingly. As above, we perform the derivation for Horvitz--Thompson estimators first, and return to deterministic estimators later.

\subsection{Estimation for overlapping groups}
\label{sec:estimation-overlapping-groups}
So let us now revisit the choice of estimator made in Section~\ref{sec:inv-prob-groups} in the case where groups overlap. Let $\mathcal{G}$ denote a set of groups whose union is all Pauli strings in the observable, and let $\pi$ be a probability distribution on $\mathcal{G}$. For each Pauli string $P$, let $\mathcal{G}_P \subseteq \mathcal{G}$ denote the groups to which $P$ belongs, and define $\pi_P = \sum_{G \in \mathcal{G}_P} \pi_G$. That is, if in a given iteration of our measurement procedure we sample a group $G \in \mathcal{G}$ with probability $\pi_G$, then $\pi_P$ is the probability of learning about $P$ in that iteration. Moreover, if this procedure has been carried out $M$ times, let $n_{P,G}^\pm$ denote the number of times $\pm 1$ has been observed for measurements of $P$ in measurements corresponding to a group $G \in \mathcal{G}_P$, and let $n_G = n_{P,G}^+ + n_{P,G}^-$ denote the number of times group $G$ has been sampled. As before, introduce
\[
    \widehat{s_{P,G}} = n_{P,G}^+ - n_{P,G}^-
\]
where now $n_{P,G}^\pm$ denote the number of times we observe $\pm 1$ for $P$, based only on observations corresponding to $G$. With this, define an estimator $\widehat{s_P}$ of the total of $P$ by
\[
    \widehat{s_P} = \frac{1}{\pi_P} \sum_{G \in \mathcal{G}_P} \widehat{s_{P,G}}
\]
and an estimator of the expectation value of $O$ by
\[
    \estO_\pi = \frac{1}{M} \sum_{P} c_P \widehat{s_P}.
\]
Since $\bbE[\widehat{s_{P,G}}] = M\pi_G \trace(\psi P)$, we have
\[
    \bbE[\widehat{s_P}] = \frac{1}{\pi_P} \sum_{G \in \mathcal{G}_P} M\pi_G \trace(\psi P) = \frac{M\trace(\psi P)}{\pi_P} \sum_{G \in \mathcal{G}_P} \pi_G = M\trace(\psi P),
\]
and therefore finally
\[
    \bbE[\estO] = \frac{1}{M}\sum_P c_P M \trace(\psi P ) = \sum_P c_P \trace(\psi P) = \trace(\psi O ),
\]
so $\estO_\pi$ is unbiased.

Even though measurements from different groups can be used to infer information about the same Pauli string, two measurements of the same Pauli coming from different trials are independent, so that at the end of the day, the variance of $\estO_\pi$ does not depend on the covariance of the same Pauli string between different groups. Concretely, assume again that $M = 1$, and let us determine the one-shot variance. As previously, let $X_P = n_{P}^+ - n_P^-$ be the variable counting measurement results for a Pauli string $P$, let $\mu_P = \bbE[X_P]$ and $\Sigma_{PQ} = \Cov(X_P, X_Q)$, and note that even though $P$ may belong to several groups, when $M = 1$ at most one such group is measured. If one samples a given group $G \in \mathcal{G}$, the only non-trivial contributions to $\hat\mu$ come from the Pauli strings $P$ such that $\mathcal G_P$ is non-empty; indeed, the total contribution is
\[
    \sum_{\{P \mid G \in \mathcal{G}_P\}} \frac{c_P}{\pi_P} (n^+_{P} - n_{P}^-).
\]
Note that $\{P \mid G \in \mathcal G_P\} = G$. Let $I$ denote the group-valued random variable with $\Pr(I = G) = \pi_G$. Then
\[
    \bbE[\estO_\pi \mid I = G] = \sum_{P \in G} \frac{c_P\mu_P}{\pi_P}.
\]
Similarly,
\[
    \Var(\estO_\pi \mid I = G) = \sum_{P,Q \in G} \frac{c_Pc_Q}{\pi_P\pi_Q} \Sigma_{PQ},
\]
and
\[
    \Var(\bbE[\estO_\pi \mid I]) = \sum_{G \in \mathcal G} \pi_G \left(\sum_{P \in G}\frac{c_P\mu_P}{\pi_P}\right)^2 - \left(\sum_{G \in \mathcal G} \pi_G \sum_{P \in G} \frac{c_P\mu_P}{\pi_P}\right)^2.
\]
As before, by reordering the terms, the final term reduces to $\trace(\psi O)^2$ since
\[
    \sum_{G \in \mathcal G} \pi_G \sum_{P \in G} \frac{c_P\mu_P}{\pi_P} = \sum_P \frac{c_P\mu_P}{\pi_P} \sum_{G \in \mathcal G_P} \pi_G = \sum_P c_P \mu_P = \trace(\psi O).
\]
On the other hand,
\[
    \bbE[\Var(\estO_\pi \mid I)] = \sum_{G \in \mathcal G} \pi_G \sum_{P,Q \in G} \frac{c_Pc_Q}{\pi_P\pi_Q} \Sigma_{PQ},
\]
so the total variance is
\begin{align*}
    \Var(\estO_\pi) &= \sum_{G \in \mathcal{G}} \pi_G \sum_{P, Q \in G} \frac{c_Pc_Q}{\pi_P\pi_Q} (\Sigma_{PQ} + \mu_P \mu_Q) - (\trace(\psi O))^2 \\ 
     &=\sum_{G \in \mathcal{G}} \pi_G \sum_{P, Q \in G} \frac{c_Pc_Q}{\pi_P\pi_Q} \trace(\psi PQ) - (\trace(\psi O))^2.
\end{align*}
And as before, if we introduce for each group $G$ the sub-Hamiltonian $H_G = \sum_{P \in G} \frac{c_P}{\pi_P} P$, this can also be written as
\[
    \Var(\estO_\pi) = \sum_{G \in \mathcal G} \pi_G \bbE[H_G^2] - \trace(\psi O)^2.
\]
As before, we can not in general determine variance-minimizing $\{\pi_G\}_{G \in \mathcal G}$ without knowing the state $\psi$, but if we assume a non-informative prior, we may produce an approximate solution by instead minimizing
\begin{align}
    \label{eq:opt-problem-overlapping-groups}
    \sum_{G \in \calG} \pi_G \sum_{P \in G}\frac{c_P^2}{\pi_P^2}.
\end{align}

This time the Lagrangian $\mathcal L: \bbR^{n_G}_{> 0} \times \bbR \to \bbR$ becomes
\[
    \mathcal{L}(\pi, \lambda) = \sum_{G \in \mathcal G} \pi_G \sum_{P \in G} \frac{c_P^2}{\pi_P^2} + \lambda\left(\sum_{G \in \mathcal{G}} \pi_G - 1 \right).
\]
By reordering the terms,
\[
    \sum_{G \in \mathcal{G}} \pi_G \sum_{P \in G} \frac{c_P^2}{\pi_P^2} = \sum_{P} \frac{c_P^2}{\pi_P^2} \sum_{G \in \mathcal G_P} \pi_G = \sum_P \frac{c_P^2}{\pi_P}.
\]
Now $\tfrac{\partial \pi_P}{\partial \pi_G}$ is $1$ if $G \in \mathcal G_P$, i.e. the $G$ for which $P \in G$, and vanishes otherwise. In particular, for a $G \in \mathcal G$, by the chain rule,
\[
    \frac{\partial \mathcal L}{\partial \pi_G} (\pi, \lambda) = -\sum_{P \in G} \frac{c_P^2}{\pi_P^2} + \lambda,
\]
so at stationary points we must have $\lambda = \sum_{P \in G} \frac{c_P^2}{\pi_P^2}$ for all $G \in \mathcal G$. By applying this for every $G \in \mathcal G$, we see that
\[
    \lambda = \lambda \sum_{G \in \mathcal G} \pi_G = \sum_{G \in \mathcal G} \lambda \pi_G = \sum_{G \in \mathcal G} \pi_G \sum_{P \in G} \frac{c_P^2}{\pi_P^2},
\]
which can be used as a certificate of optimality. However, unlike what we have seen in the previous special cases, no closed form solution exists.

Instead, in our numerical experiments we rely on off-the-shelf constrained optimization solvers, and for the results reported below, we use the SciPy SLSQP solver \cite{SciPy2020-NMeth, Lawson1995-gl, Kraft1988}, with its default convergence criteria. We find that for the benchmark cases, it is important to provide a non-trivial initial guess $\pi^0$, which we take to be
\[
    \pi_G^0 = \frac{\sqrt{\sum_{P \in G} c_P^2}}{\sum_{G' \in \mathcal{G}} \sqrt{\sum_{P \in G'} c_P^2}},
\]
i.e. what would be the optimal solution if the groups did not overlap.

\subsection{LDF maximalization}

Recall from Section~\ref{sec:gcn} that one may obtain clique covers by finding colourings of the complement of the Pauli commutation graph in a largest degree first fashion.

These covers are partitions but may be extended to covers by maximal cliques in a similar fashion: once a cover $C_1, \dots, C_m$ has been found, order the vertices, i.e. the Pauli strings, as $P_1, \dots, P_k$ such that the for $i \leq j$, the degree of $P_i$ in the Pauli commutation graph is lower than the degree of $P_j$. Now, for $i = 1, \dots, k$ and $j = 1, \dots, m$, add $P_i$ to $C_j$ if $C_j \cup \{P_i\}$ is a clique.

Note that this process is well-defined for any clique cover; not just one that is obtained from largest degree first colouring, and we refer to this heuristic for converting clique covers to covers by maximal cliques as \emph{LDF maximalization}.

\subsubsection{Cliffordization}
\label{sec:cliffordization}
Notably, when performing this maximalization, the Pauli commutation graph could be either $G^{\QWC}$ or $G^{\FC}$. This also means that one can apply the procedure to a clique cover of $G^{\QWC}$ and get a cover by maximal cliques in $G^{\FC}$.

One place where this may be useful is if one works in a setting where applying general Clifford measurement circuits is unproblematic from a resource perspective, but where the procedure for producing cliques is one which natively works over $G^{\QWC}$; that is, any procedure that produces Pauli basis measurements. Such procedures include many of the classical shadow based methods \cite{Huang2020-ws, Huang2021-ga, Hadfield2021-hx, Hadfield2022-vc, Gresch2025-zy}, in which the output of the measurement generation scheme is simply a collection of Pauli measurement operations, which may be thought of as cliques in $G^{\QWC}$, by identifying any given measurement circuit with the Pauli strings for which the measurement provides information. As passing from $G^{\QWC}$ or $G^{\FC}$ is exactly the process of having to apply general Clifford measurement circuits, we refer to this approach to augmenting an existing measurement schedule as \emph{Cliffordization}, and in Section~\ref{sec:comparison-with-existing}, we will see, for example, the improvements obtained by applying this Cliffordization on top of ShadowGrouping \cite{Gresch2025-zy}.

\subsection{Benchmark results}
We are now in a position where we can test whether maximalization offers a benefit on the benchmark cases, and where we may now finally compare the different grouping methods from Section~\ref{sec:stratified-sampling} in terms of the variances of the resulting estimators.

\subsubsection{Unmaximalized vs. maximalized LDF}
\label{sec:unmax-vs-max-ldf}
In Table~\ref{tab:compare-ldf-maximalization} we show the variance of the estimators obtained using the LDF grouping (Section~\ref{sec:gcn}) which gives a partition of $\calP$, using the $\ell^2$ probability distribution \eqref{eq:opt-non-overlapping}. We then use LDF maximalization on top of this grouping to obtain a cover by maximal cliques. To turn this into a complete estimator, we take as our probability distribution on one hand the $\ell^2$ distribution given by \eqref{eq:opt-non-overlapping}, and on the other the one optimized using SLSQP as described in Section~\ref{sec:estimation-overlapping-groups}. While not necessarily optimal in the case of overlapping groups, the $\ell^2$ distribution can be calculated efficiently and functions as a baseline for further optimization. We find that for the larger molecules, even this naive baseline provides an improvement over the unmaximalized case, and that as expected, using the off-the-shelf SLSQP solver leads to even greater improvements. Note also that in several cases, the variances for the estimators from QWC grouping are smaller than those for FC grouping, even though we generally expect the added flexibility from being able to use the larger FC groups (and the correspondingly more resource intensive measurement circuits) would make it possible to do with fewer measurements. 

\begin{table}[h!]
    \centering
    \begin{tabular}{l||r|r|r||r|r|r}
        \multicolumn{1}{c||}{} & 
        \multicolumn{3}{c||}{QWC} & 
        \multicolumn{3}{c}{FC} \\
        System & $\ell^2$ & Maxim. + $\ell^2$ & Maxim. + SLSQP &$\ell^2$ & Maxim. + $\ell^2$ & Maxim. + SLSQP \\
        \hline
        \ce{H2} (STO-3G) & \textbf{0.424} & \textbf{0.424} & \textbf{0.424} & 0.352 & 0.493 & \textbf{0.288} \\
        \ce{H2} (6-31G) & 21.0 & 10.4 & \textbf{5.48} & 14.8 & 16.8 & \textbf{13.9} \\
        \ce{LiH} & 46.6 & 17.8 & \textbf{7.2} & 18.4 & 20.9 & \textbf{9.67} \\
        \ce{BeH2} & 117 & 42.0 & \textbf{23.2} & 102.0 & 30.8 & \textbf{15.1} \\
        \ce{H2O} & 921 & 607 & \textbf{131} & 1070 & 576 & \textbf{291} \\
        \ce{NH3} & 732 & 551 & \textbf{172} & 625 & 219 & \textbf{145} \\
    \end{tabular}
    \bigskip
    \caption{Comparison of estimator variances for unmaximalized LDF grouping using the $\ell^2$ probability distribution, and maximalized LDF grouping using either the $\ell^2$ probability distribution or improving on it using SLSQP. All entries are in units of [Ha$^2$].}
    \label{tab:compare-ldf-maximalization}
\end{table}

\subsubsection{Comparison of grouping methods}
\label{sec:comparison-of-grouping-methods}
We are now finally in a position where we can determine if the clique cover minimalization problem serves as a good proxy for variance reduction. Recall that in Tables~\ref{tab:num-groups-qwc} and \ref{tab:num-groups-fc}, we compared the clique covers obtained from a variety of grouping methods. In Tables~\ref{tab:grouping-algorithm-variance-qwc} and \ref{tab:grouping-algorithm-variance-fc}, we give the corresponding variances. Here, in every case, we use maximalization when the clique cover is not already a cover by maximal cliques, and we use SLSQP to optimize the probability distributions.

We find that there is no clear relation between the number of groups, and the resulting variances; in particular, this suggests that if the end goal is to obtain a resource-efficient estimator of expectation values, then little, if anything, is gained by spending effort to find smaller clique covers.

\begin{table}
    \begin{tabular}{l||r|r|r|r|r}
        System & ILP & G-SC & CG & RLF & LDF \\
        \hline
        \ce{H2} (STO-3G) & \textbf{0.424} & \textbf{0.424} & \textbf{0.424} & \textbf{0.424} & \textbf{0.424} \\
        \ce{H2} (6-31G) & 5.99 & \textbf{4.81} & 5.38 & 5.73 & 5.48 \\
        \ce{LiH} & \textbf{3.86} & 7.24 & 5.99 & 6.63 & 7.2 \\
        \ce{BeH2} & 19.9 & 23.5 & \textbf{17.4} & 21.1 & 23.2 \\
        \ce{H2O} & 143 & 164 & 126 & \textbf{112} & 131 \\
        \ce{NH3} & &&& 173 & \textbf{172} \\
    \end{tabular}
    \bigskip
    \caption{The variances [Ha$^2$] of the estimators obtained from applying the various grouping algorithms on the \emph{qubit-wise commutation graph}.}
    \label{tab:grouping-algorithm-variance-qwc}
\end{table}

\begin{table}
    \begin{tabular}{l||r|r|r|r|r}
        System & ILP & G-SC & CG & RLF & LDF \\
        \hline
        \ce{H2} (STO-3G) & \textbf{0.288} & \textbf{0.288} & \textbf{0.288} & \textbf{0.288} & \textbf{0.288} \\
        \ce{H2} (6-31G) & 14.8 & 9.56 & 3.93 & \textbf{3.69} & 13.9 \\
        \ce{LiH} &  & & \textbf{6.43} & 17.6 & 9.67 \\
        \ce{BeH2} & & & 18.1 & 36.7 & \textbf{15.1} \\
        \ce{H2O} & & & \textbf{104} & 275 & 291 \\
        \ce{NH3} & & & & 294 & \textbf{145} \\
    \end{tabular}
    \bigskip
    \caption{The variances [Ha$^2$] of the estimators obtained from applying the various grouping algorithms on the \emph{full commutation graph}.}
    \label{tab:grouping-algorithm-variance-fc}
\end{table}

\subsection{Comparison with deterministic schedules}
\label{sec:comparison-with-deterministic}
So far, we have mostly been concerned with Horvitz--Thompson estimators. Other algorithms for estimation of expectation values, such as those obtained from derandomized shadow \cite{Huang2021-ga} and ShadowGrouping \cite{Gresch2025-zy} do not output randomized estimators but rather, for a fixed number of measurements $M > 0$, what we could call a deterministic estimator $\estO^\mathrm{det,M}$ in the form of $M$ fixed measurements to be performed. As such, while the underlying probabilistic nature of quantum states will still mean that these estimators have non-vanishing variance, there is no additional variance from working with a probability distribution of measurements. When these fixed measurements together cover all Pauli strings of the observable of interest, the estimator becomes unbiased. Specifically, let $M > 0$ be a number of measurements of a collection of groups $\mathcal{G}$, let $M_G > 0$ be the number of measurements of a group $G \in \mathcal{G}$, for each Pauli $P \in \mathcal{P}$ let $\mathcal{G}_P \subseteq \mathcal{G}$ denote the set of groups containing $P$, let $M_P = \sum_{G \in \mathcal G_P} M_G$ denote the number of measurements of $P$, which we assume to satisfy $M_P > 0$, and for each $G \in \mathcal{G}_P$, let $n_{P,G}^\pm$ be the number of times the value $\pm1$ has been measured for $P$ following a measurement of $G$, and let $n_P^\pm = \sum_{G \in \mathcal G_P} n^\pm_{P,G}$. Then,
\[
    \hat{\mu}^\mathrm{det}_P = \sum_{G \in \mathcal G_P} \frac{M_G}{M_P} \frac{n_{P,G}^+ - n_{P,G}^-}{M_G} = \sum_{G \in \mathcal G_P} \frac{n_{P,G}^+ - n_{P,G}^-}{M_P} = \frac{n^+_P - n^-_P}{M_P}
\]
is an unbiased estimator of $\trace(\psi P)$, and
\[
    \estO^\mathrm{det,M} = \sum_P c_P\hat{\mu}^\mathrm{det}_P
\]
is an unbiased estimator of $\trace(\psi O)$.

For $P, Q \in \mathcal P$, let $M_{P,Q}$ denote the number of groups $G \in \mathcal{G}$ that contain both $P$ and $Q$ so that in particular, $M_{P, P} = \lvert \mathcal G_P \rvert$. Then the variance of the above estimator is
\begin{align}
    \label{eq:det-variance}
    \Var(\estO^\mathrm{det,M}) = \sum_{P, Q \in \mathcal P} \frac{c_P c_Q M_{P,Q}}{M_P M_Q} \Cov(X_P, X_Q) = \sum_{G \in \mathcal G} M_G \sum_{P, Q \in G} \frac{c_Pc_Q}{M_PM_Q} \Cov(X_P, X_Q),
\end{align}
where as before, $\Cov(X_P, X_Q) = \trace(\psi PQ) - \trace(\psi P) \trace(\psi Q)$ is the covariance between the $\pm 1$-valued random variables defined by $P$ and $Q$.

The expression $M \Var(\estO^\mathrm{det,M})$ can be directly compared to the one-shot variance of the Horvitz--Thompson estimators considered previously; such a comparison is apples-to-apples in the sense that it captures the quality of estimation for the same number of measurements. Indeed, $\estO^\mathrm{det}_P$ is the $M$-experiment Horvitz--Thompson estimator with probability distribution given by $\pi_G = M_G / M$, conditioned on the $M$ outcomes of sampling measurements being exactly $\mathcal{G}$.

The next result shows that the variance of the unconditioned estimator based on this probability distribution will always be at least the variance of the deterministic estimator.

\begin{prop}
    \label{prop:uniform-horvitz-thompson}
    Let $\estO^\mathrm{det,M}$ be an estimator based on a choice of $M > 0$ groups, let $\pi$ be the probability distribution given by counting those groups, and let $\estO_\pi$ be the corresponding Horvitz--Thompson estimator. Then
    \[
        \Var(\estO^\mathrm{det,M}) \leq \frac{1}{M} \Var(\estO_\pi).
    \]
\end{prop}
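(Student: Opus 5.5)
The plan is to compute both sides explicitly from formulas already derived. The difference then collapses to a single non-negative expression, which Cauchy--Schwarz (equivalently, Jensen) handles.

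First I would identify the probability distribution. Counting the $M$ chosen groups gives $\pi_G = M_G/M$ for each distinct $G \in \mathcal G$. Hence
\[
    \pi_P = \sum_{G \in \mathcal G_P} \pi_G = \frac{M_P}{M}.
\]
I would then substitute these into the one-shot variance of the overlapping-group Horvitz--Thompson estimator from Section~\ref{sec:estimation-overlapping-groups}. That formula is
\[
    \Var(\estO_\pi) = \sum_{G} \pi_G \sum_{P,Q \in G} \frac{c_Pc_Q}{\pi_P\pi_Q}\bigl(\Sigma_{PQ} + \mu_P\mu_Q\bigr) - \trace(\psi O)^2 .
\]
After substitution and dividing by $M$, it becomes
\[
    \frac{1}{M}\Var(\estO_\pi) = \sum_G M_G \sum_{P,Q\in G}\frac{c_Pc_Q}{M_PM_Q}\Sigma_{PQ} + \sum_G M_G \Bigl(\sum_{P\in G}\frac{c_P\mu_P}{M_P}\Bigr)^2 - \frac{1}{M}\trace(\psi O)^2 .
\]
The first sum is exactly the right-hand side of \eqref{eq:det-variance}, i.e.\ $\Var(\estO^{\mathrm{det},M})$.

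It therefore remains to show that the remainder is non-negative. Set $y_G = \sum_{P\in G} c_P\mu_P/M_P$. Exchanging the order of summation and using $\sum_{G\in\mathcal G_P} M_G = M_P$ gives
\[
    \sum_G M_G\, y_G = \sum_P c_P \mu_P = \trace(\psi O).
\]
Since $\sum_G M_G = M$, Cauchy--Schwarz applied with weights $M_G$ yields
\[
    \trace(\psi O)^2 = \Bigl(\sum_G M_G y_G\Bigr)^2 \le \Bigl(\sum_G M_G\Bigr)\Bigl(\sum_G M_G y_G^2\Bigr) = M \sum_G M_G y_G^2 ,
\]
which is exactly the required inequality. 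Conceptually, the remainder is $\tfrac1M \Var(\bbE[\estO_\pi \mid I])$ from the law of total variance: the deterministic schedule removes the sampling variance and keeps only the conditional quantum variance.

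The main point needing care is bookkeeping rather than analysis. The $M$ groups may contain repeats, so $\pi$ must be read as a distribution on the distinct groups with multiplicities $M_G$. The covariance term of the Horvitz--Thompson variance must then be checked to coincide term by term with \eqref{eq:det-variance}, including the diagonal terms $P = Q$ with $M_{P,P} = \lvert\mathcal G_P\rvert$. Once $\pi_P = M_P/M$ is in place, this matching is immediate.
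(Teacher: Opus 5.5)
Your proposal is correct and follows the paper's proof essentially line for line. You substitute $\pi_G = M_G/M$ and $\pi_P = M_P/M$, identify the covariance part with \eqref{eq:det-variance}, and show the remainder $\sum_G M_G y_G^2 - \trace(\psi O)^2/M$ is non-negative by Cauchy--Schwarz. Your final step, $\trace(\psi O)^2 \le M \sum_G M_G y_G^2$, keeps the factor of $M$ that the paper's displayed chain drops in its last equality; your remark that the remainder equals $\tfrac{1}{M}\Var(\bbE[\estO_\pi \mid I])$ gives non-negativity even without Cauchy--Schwarz.
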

\begin{proof}
    Since $\pi_G = M_G / M$, and $\pi_P = M_P/M$,
    \[
        \frac{1}{M} \Var(\estO_\pi) = \sum_{G \in \mathcal G} M_G \sum_{P,Q \in G}  \frac{c_Pc_Q}{M_PM_Q} (\Sigma_{PQ} + \mu_P\mu_Q) - \frac{\trace(\psi O)^2}{M}.
    \]
    By reordering the two sums and using that $M_{P, Q} = \sum_{G \in \mathcal G_P \cap \mathcal G_Q} M_G$, it follows that
    \begin{align*}
        \frac{1}{M} \Var(\estO_\pi) - \Var(\estO^\mathrm{det,M}) &= \sum_{G \in \mathcal G} M_G \sum_{P, Q \in G} \frac{c_P c_Q \mu_P \mu_Q}{M_P M_Q} - \frac{\trace(\psi O)^2}{M}\\
         &= \sum_{G \in \mathcal G} M_G \left(\sum_{P \in G} \frac{c_P \mu_P}{M_P} \right)^2 - \frac{\trace(\psi O)^2}{M}.
    \end{align*}
    By Cauchy--Schwarz,
    \[
        \trace(\psi O)^2 = \left(\sum_{G \in \mathcal G} \pi_G \sum_{P \in G} \frac{c_P \mu_P}{\pi_P}\right)^2 \leq 
        \left(\sum_{G \in \mathcal G}(\sqrt{\pi_G})^2 \right) \sum_{G \in \mathcal G} \left( \sqrt{ \pi_G} \sum_{P \in G} \frac{c_P \mu_P}{\pi_P}\right)^2 = \sum_{G \in \mathcal{G}} M_G \left(\sum_{P \in G} \frac{c_P \mu_P}{M_P}\right)^2.
    \]
    
\end{proof}
In Table~\ref{tab:compare-det-random}, we provide a comparison of estimator variances for measurement schedules obtained from ShadowGrouping \cite{Gresch2025-zy}. We take $M$ to be the minimal number such that the estimators are unbiased and compare the variance of the deterministic estimator $\estO^\mathrm{det,M}$, scaled by $M$, to those obtained by using the same measurements in a randomized setting, using either the group counting distribution $\pi_{\mathrm{uniform}}$ as in Proposition~\ref{prop:uniform-horvitz-thompson}, or one optimized with SLSQP, denoted $\pi_{\mathrm{opt}}$, as in Section~\ref{sec:estimation-overlapping-groups}. As the results in the table show, the difference between the variances can be rather dramatic; as such, some care must be taken when comparing randomized schemes to deterministic ones.

\begin{table}
    \centering
    \begin{tabular}{l|r||r|r|r}
        System &$M$& $M\Var\left(\estO^\mathrm{det,M}\right)$ & $\Var\left(\estO_{\pi_{\mathrm{uniform}}}\right)$ & $\Var\left(\estO_{\pi_{\mathrm{opt}}}\right)$ \\
        \hline
        \ce{H2} (STO-3G) & 5 & 0.195 & 4.17 & 0.424 \\
        \ce{H2} (6-31G) & 44 & 2.06 & 8.62 & 5.57 \\
        \ce{LiH} & 142 & 1.70 & 6.76 & 3.74 \\
        \ce{BeH2} & 161 & 6.65 & 27.7 & 16.5 \\
        \ce{H2O} & 239 & 19.3 & 301.0 & 77.5 \\
        \ce{NH3} & 653 & 45.9 & 191.0 & 88.6 \\
    \end{tabular}
    \bigskip
    \caption{Estimator variances for measurement schedules obtained from ShadowGrouping. All entries are in units of [Ha$^2$].}
    \label{tab:compare-det-random}
\end{table}

\subsection{Derandomization and allocation}
\label{sec:derandomization-and-allocation}
In the above, we considered the process of taking a deterministic schedule, such as what is produced by ShadowGrouping, and ``randomizing'' it by drawing the measurements from a distribution as opposed to running through them in order. In the simplest example, where we would use $\pi_{\mathrm{uniform}}$ to sample, hence ensuring that asymptotically, the fraction of the measurement budget allocated to any given measurement would agree with that of the deterministic estimator, we see how the additional variance from sampling measurements can be significant. Indeed, the deterministic estimator may be seen as the result of conditioning the randomized estimator on a particular outcome of the measurement-choosing random variable $I$, so for a fair comparison between the two classes of estimators, it becomes natural to evaluate the variance of the Horvitz--Thompson estimators, conditioned on particular sampling outcomes.

\subsubsection{Optimal allocation}
Just like we could find probability distributions $\pi$ for Horvitz--Thompson estimators so as to minimize their variance, we can ask how to choose $M_G > 0$, $\sum_G M_G = M$, for a deterministic estimator in a way that minimizes its variance \eqref{eq:det-variance}. This is generally a hard integer optimization problem, but as long as $M$ is large, we may approximate its solution by assuming that $M_G = \pi_G M$ for $\pi_G \in (0, 1]$, $\sum_G \pi_G = 1$; at the end of the day, this means that some rounding will have to be performed to ensure that $M_G$ are integers, but when $M$ is large, this rounding has a correspondingly smaller impact on the resulting variance. Let $\pi_P = \sum_{G \in \calG_P} \pi_G$ and $\pi_{P,Q} = \sum_{G \in \calG_P \cap \calG_Q} \pi_G$. We are then led to choose $\pi_G$ in a way that minimizes
\begin{align}
    \label{eq:opt-alloc}
    \sum_{P, Q \in \calP} \frac{c_P c_Q \pi_{P,Q}}{\pi_P \pi_Q} \Cov(X_P, X_Q) = \sum_{G \in \calG} \pi_G \sum_{P, Q \in G} \frac{c_P c_Q}{\pi_P\pi_Q} \Cov(X_P, X_Q).
\end{align}

\begin{example}
    \label{ex:min-var-example}
    Before moving on, let us consider an example where optimal allocation for a determinsitic estimator looks qualitatively different than the optimal assignment of probabilities for Horvitz--Thompson estimators. Specifically, let $X_1$ and $X_2$ be independent random variables with means $\mu_1 = 0$ and $\mu_2 = \tfrac{1}{2}$, and variances $\Var(X_1) = 1$, and $\Var(X_2) = \tfrac{3}{4}$. Assume that each variable belongs to its own group, $G_1$ and $G_2$ respectively, and suppose that we are tasked with estimating $\bbE[X_1 + \sqrt{\tfrac{4}{3}}X_2]$.

    Then the optimization objective \eqref{eq:opt-alloc} becomes
    \[
        \frac{1}{\pi_{G_1}} \Var(X_1) + \frac{4}{3\pi_{G_2}} \Var(X_2) = \frac{1}{\pi_{G_1}} + \frac{1}{\pi_{G_2}},
    \]
    which is minimized when $\pi_{G_1} = \pi_{G_2} = \tfrac{1}{2}$, i.e. we should allocate measurements uniformly.

    On the other hand, for the Horvitz--Thompson estimator, by \eqref{eq:opt-non-overlapping}, the optimal assignment of probabilities is
    \[
        \pi_{G_1} = \frac{1}{1+\sqrt{\tfrac{4}{3}}}, \quad \pi_{G_2} = \frac{\sqrt{\tfrac{4}{3}}}{1+\sqrt{\tfrac{4}{3}}},
    \]
    i.e. that we should sample $X_2$ with higher frequency than $X_1$. See Figure~\ref{fig:min-var-example-plot}.
\end{example}
\begin{figure}
    \centering
    \includegraphics[width=0.75\linewidth]{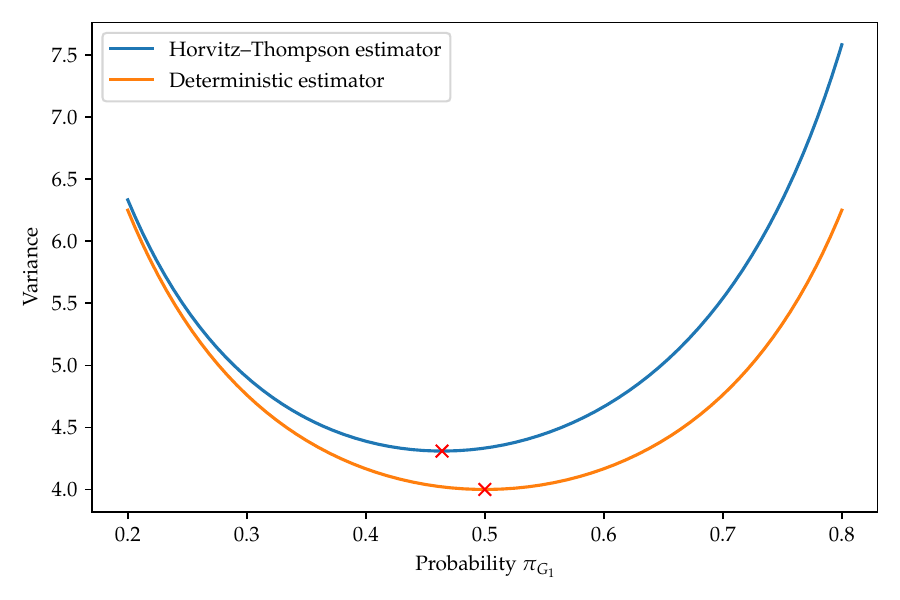}
    \caption{Variances of the estimators from Example~\ref{ex:min-var-example}.}
    \label{fig:min-var-example-plot}
\end{figure}

Coming back to the general optimization problem, as in the randomized case, we may approximate $\Cov(X_P, X_Q)$ by $c \delta_{P,Q}$ for some constant $c > 0$, so we are left with having to optimize
\[
    \sum_{G \in \calG} \pi_G \sum_{P \in G} \frac{c_P^2}{\pi_P^2}.
\]
This is exactly the same optimization problem as \eqref{eq:opt-problem-overlapping-groups}, so we may use the same approaches to solve it.

To benchmark this, we consider a few variants. In either case, assume that a grouping, i.e. a clique cover, $\calG$ has already been produced.
\begin{enumerate}
    \item Let $M = \lvert \calG \rvert$ and allocate one measurement to each group. This produces an $M$-trial deterministic estimator $\estO^{\mathrm{det},M}_{\pi_\mathrm{uniform}}$.
    \item Assume that a probability distribution, $\pi$, on $\calG$ is available. Let $\tilde{M} = 3\lvert \calP\rvert > \lvert \calG \rvert$, and allocate measurements by letting $M_G = \lceil\pi_G \tilde{M} \rceil$, and let $M = \sum_{G \in \calG} M_G \geq \tilde{M}$. This defines another deterministic estimator than we denote $\estO^{\mathrm{det},M}_\pi$.
\end{enumerate}

\subsubsection{Results}
\label{sec:allocation-results}

In Table~\ref{tab:compare-lvf-several-deterministic}, we compare the variances of the resulting deterministic estimators to that of the randomized estimator where $\calG$ is obtained through LVF grouping on the QWC and FC graphs, in both cases post-processed with LDF maximalization, and for which the distribution $\pi$ on $\calG$ is either $\pi_{\ell^2}$ given by \eqref{eq:opt-non-overlapping}, or $\pi_{\mathrm{opt}}$ obtained by using SLSQP to solve \eqref{eq:opt-problem-overlapping-groups}.

In an attempt to quantify the loss of accuracy in having to assume a prior on the a priori unavailable quantum covariance, note that for the small benchmark molecules, all covariances can be calculated quickly, so even though this will likely not be possible for larger molecules, we can use this information during the optimization of $\pi$, so as to understand how useful having exact covariances values would be.

Concretely, for a given grouping $\calG$, assume that $\Cov(X_P, X_Q)$ is known for all $P, Q \in G$ and all $G \in \calG$. We may then try to find $\pi$ minimizing \eqref{eq:opt-problem-overlapping-groups}. In its general form, this is still a hard problem, so we reduce the level of ambition and still assume that $\Sigma_{PQ} = 0$ for $P \not= Q$. In other words, we solve the same optimization problem as before, but this time we use the actual variances of each Pauli string, and we choose $\pi$ such that
\[
    \sum_{G \in \calG} \pi_G \sum_{P \in G} \frac{c_P^2}{\pi_P^2} \Var(X_P)
\]
is minimized, as before relying on an off-the-shelf SLSQP solver. We call the resulting estimator $\estO^{\mathrm{det},M}_{\pi_{\text{known var.}}}$.

\begin{table}[h!]
    \centering
    \begin{tabular}{l||r|r|r|r||r|r|r|r}
        \multicolumn{1}{c||}{} & 
        \multicolumn{4}{c||}{QWC} & 
        \multicolumn{4}{c}{FC} \\
        System & $\pi_{\mathrm{uniform}}$ & $\pi_{\ell^2}$ & $\pi_{\mathrm{opt}}$ & $\pi_{\text{known var.}}$ & $\pi_{\mathrm{uniform}}$ & $\pi_{\ell^2}$ & $\pi_{\mathrm{opt}} $  & $\pi_{\text{known var.}}$ \\
        \hline
        \ce{H2} (STO-3G) & 0.195 & 0.157 & 0.157 & 0.137 & 0.125 & 0.128 & 0.206 & 0.125 \\
        \ce{H2} (6-31G) & 2.2 & 3.26 & 2.96 & 1.85 & 0.964 & 0.954 & 1.27 & 0.815 \\
        \ce{LiH} & 3.57 & 2.45 & 2.26 & 1.31 & 1.64 & 1.68 & 2.13 & 1.21 \\
        \ce{BeH2} & 10.4 & 7.27 & 7.25 & 4.14 & 3.18 & 3.41 & 4.71 & 2.39 \\
        \ce{H2O} & 64.9 & 30.2 & 52.0 & 19.1 & 21.0 & 12.8 & 31.4 & 8.89 \\
        \ce{NH3} & 153 & 76.3 & 77.3 & 37.0 & 30.4 & 24.1 & 34.1 & 13.2 \\
    \end{tabular}
    \bigskip
    \caption{Comparison of $M \Var\left(\estO^{\mathrm{det},M}_\pi\right)$ for the four choices of $\pi$. All entries are in units of [Ha$^2$].}
    \label{tab:compare-lvf-several-deterministic}
\end{table}

Perhaps surprisingly, the additional information from the optimized probability distribution $\pi$ does not generally appear to be helpful; that is, the results for $\pi_{\mathrm{opt}}$ are not in general better than those for $\pi_{\mathrm{uniform}}$, but as expected, including the actual variance always yields the best results. This suggests that the approximation from assuming constant variance across all Pauli strings is not useful: If we use $c_P^2 \Var(X_P)$ instead of just $c_P^2$ for our optimization we get much better results. This shows that there is high value in being able to estimate $\Var(X_P)$ a priori; an extreme case that we have already used implicitly is that $\Var(X_I) = 0$ which was motivation that the identity Pauli string should be discarded entirely. In Figure~\ref{fig:coeff-vs-var-plot}, we indicate the relation between $c_P^2$ and $c_P^2\Var(X_P)$; in each case, the points with large coefficients correspond to Pauli strings that are the identity with one term replaced by a $Z$.

\begin{figure}
    \centering
    \includegraphics[width=0.99\linewidth]{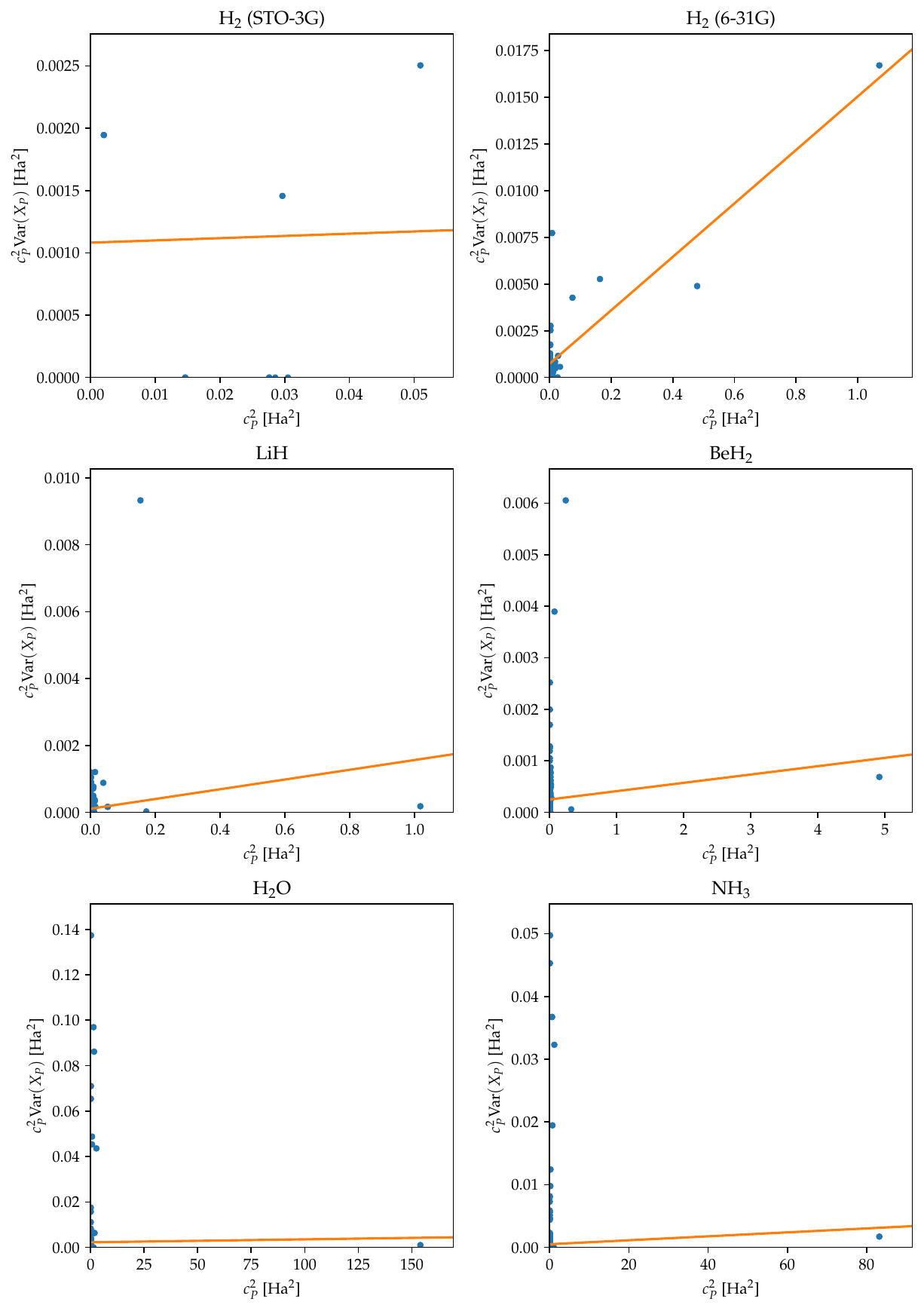}
    \caption{The relation between $c_P^2$ and $c_P^2\Var(X_P)$ for the 6 benchmark Hamiltonians. Each plot shows the points $(c_P^2, c_P^2\Var(X_P))$ for all Pauli strings $P$ for that Hamiltonian. The line is a linear least-squares fit.}
    \label{fig:coeff-vs-var-plot}
\end{figure}
\section{Bias--variance tradeoff}
\label{sec:bias-variance}
Up until this point, we have been concerned with minimizing the variance of an unbiased estimator $\estO$. For practical purposes, it may be beneficial to allow the estimator to be biased, allowing for a smaller variance than what is otherwise possible; another meaningful objective would be to minimize the mean squared error
\[
    \MSE(\estO) = \bbE[(\estO - \expecO)^2] = \Var(\estO)  + \Bias(\estO)^2
\]
or, as is particularly common in the classical shadow literature, to minimize the number of measurements needed to ensure that $\Pr(\lvert \estO - \expecO\rvert < \eps) < 1 - \delta$ for a given desired error $\eps > 0$ and failure probability $\delta > 0$.

And indeed, in e.g. \cite{Wu2023-sw, Li2025-vm} the bias-variance tradeoff features explicitly, but it may also feature implicitly in work based on deterministic estimators, such as \cite{Huang2021-ga}, whenever the measurement count is too low to produce a clique cover.

Here, we briefly study the scale of the bias that may appear in practice and discuss some pitfalls when comparing the quality of unbiased and biased estimators.

\subsection{Introducing bias by Pauli sum truncation}
One way to introduce bias is to simply remove from $\calP$ the Pauli strings which may be deemed irrelevant a priori. Decomposing $\calP = \calP' \sqcup \calP''$, with a corresponding decomposition $O = O' + O''$ of the observable, an unbiased estimator of $O'$ also defines a biased estimator $O$ whose bias is $\langle O'' \rangle$.

Such a truncation may be included systematically, \cite[App.~D]{Gresch2025-zy}, \cite{McClean2016-er}, and may be informed by the problem at hand and the nature if the quantum state; an extreme case is \cite{Majland2023-lz} in which only a single measurement circuit survives, which can be sufficient when limited precision is required, or for low-correlation states.

\begin{table}
    \centering
    \begin{tabular}{l||r|r|r}
        System & $E_{\mathrm{HF}} - \langle O \rangle$ & ROGS & Derandomized shadow \\
        \hline
        \ce{H2} (6-31G) & 0.0251 &0.03&0.06 \\
        \ce{LiH} &0.0197&0.02&0.03\\
        \ce{BeH2} &0.0339&0.02&0.06\\
        \ce{H2O} &0.0607&0.09&0.12\\
        \ce{NH3} &0.0770&0.09&0.18
    \end{tabular}
    \caption{Comparison between the energy error obtained by the constant estimator outputting $E_{\mathrm{HF}}$, and estimates of the root mean squared error of two $1000$ measurement estimators. Here, $\expecO = \langle H \rangle$ is the ground state energy}
    \label{tab:compare-hf-energy-rmse}
\end{table}

It may also be included implicitly in derandomization procedures when the measurement count is low. In Table~\ref{tab:compare-hf-energy-rmse}, we show the bias as reported by \cite[Table 2]{Li2025-vm} when running derandomized shadow \cite{Huang2021-ga} and ROGS \cite{Li2025-vm} with $M = 1000$ measurements on the benchmark cases. We also show the difference between the ground state energy, i.e. $\expecO$, and the Hartree--Fock energy, which, in all of the benchmark cases, is the energy $E_{\HF}$ of the state $\lvert 0 \dots 01 \dots 10 \dots0 1 \dots 1\rangle$, the computational basis vector built by using $\tfrac{n_v}{2}$ $0$s, $\tfrac{n_o}{2}$ $1$s, $\tfrac{n_v}{2}$ $0$s, and $\tfrac{n_o}{2}$ $1$s, where $n_v$ and $n_o$ are the number of virtual and occupied spin orbitals respectively. This energy is produced implicitly during the Hartree--Fock method leading to the Hamiltonians at hand, and as such for our purposes can be considered computationally free, in that no additional effort is required to get it. What this also means is that the constant estimator $\estO \equiv E_{\HF}$, whose variance vanishes, has lower bias than the ones estimated for the estimators in the table. This also means that some amount of care is necessary to define good baseline error levels for benchmarking, and for this reason, we will restrict ourselves to considering the variance of unbiased estimators in the below.

For algorithms generating a collection of fixed measurements, the bias of the resulting deterministic estimator vanishes once every Pauli string is measured at least once. In Figure~\ref{fig:shadow-grouping-bias}, we show the biases, standard deviations $\sqrt{\Var(\estO)}$, and root mean square errors $\sqrt{\MSE(\estO)}$ of the deterministic estimators for \ce{NH3} ground state energy estimation obtained from the ShadowGrouping \cite{Gresch2025-zy} method for low measurement counts, comparing them to the bias of the constant Hartree--Fock energy estimator above. Note that in the literature, it is common to report statistics like the root mean square error as estimates obtained from simulation, typically using about $100$ runs, but that for small systems like the benchmark molecules, with a small number of different measurements, we can also calculate these statistics exactly, and all results given here are exact. We may note, for instance, that the after $662$ measurements, every Pauli string has been measured at least once, such that the bias vanishes, yet the root mean square error is still significantly larger than that of the constant Hartree--Fock energy estimator.

\begin{figure}
    \centering
    \includegraphics[width=0.7\linewidth]{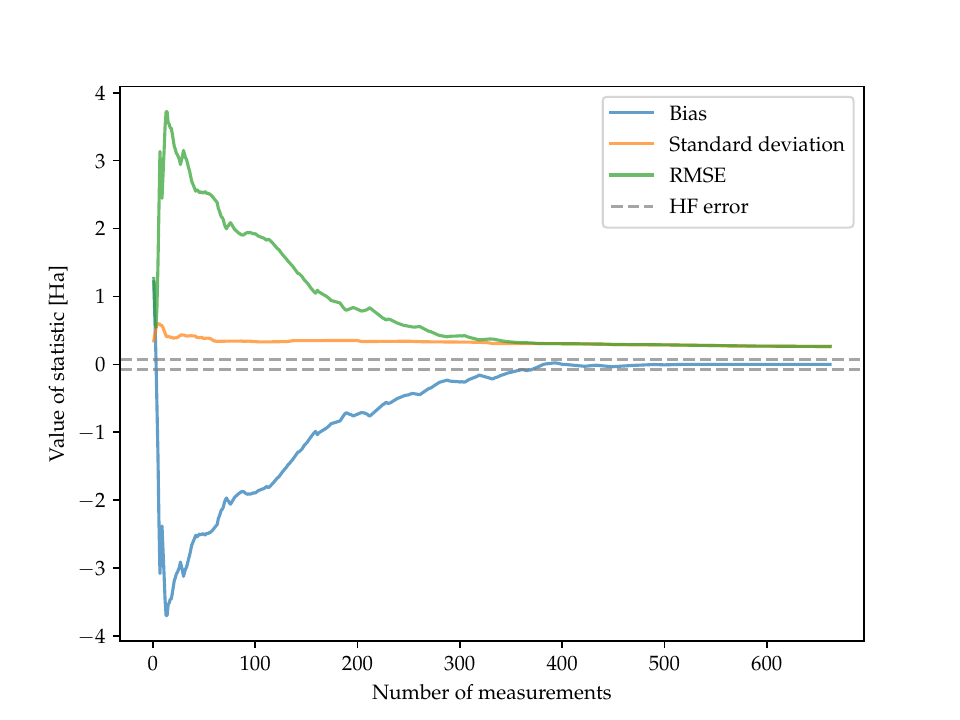}
    \caption{Bias, standard deviation, and RMSE of $\estO^{\mathrm{det},M}$ as obtained from ShadowGrouping for \ce{NH3}, for different values of $M$. The horizontal gray dashed lines are $\pm E_{\mathrm{HF} - \expecO}$.}
    \label{fig:shadow-grouping-bias}
\end{figure}

\subsection{The variance of estimators from derandomization}
\label{sec:variance-derandom}
In approaches such as those of \cite{Gresch2025-zy}, the focus is not so much on minimizing variance as it is on making as much of a given measurement budget as possible, so to be able to make a fair comparison with Horvitz--Thompson estimators, ideally we would have an asymptotic variance to compare to. In Figure~\ref{fig:shadow-grouping-one-shot-var}, we show the value of $M \Var( \estO_M^\mathrm{\det})$ for $M \in \{1, \dots, 3\lvert \calP\rvert\}$, i.e. the equivalent of the one-shot variance for the \ce{LiH} benchmark case; its convergence suggests that a reasonable comparison with other methods would be provided by choosing a suitably large value of $M$. We find that for the benchmark cases, taking $M = 3 \lvert \calP \rvert$ as in \cite{Zhang2023-lr} appears to be reasonable; see Table~\ref{tab:shadow-grouping-one-shot-var}.

\begin{figure}
    \centering
    \includegraphics[width=0.7\linewidth]{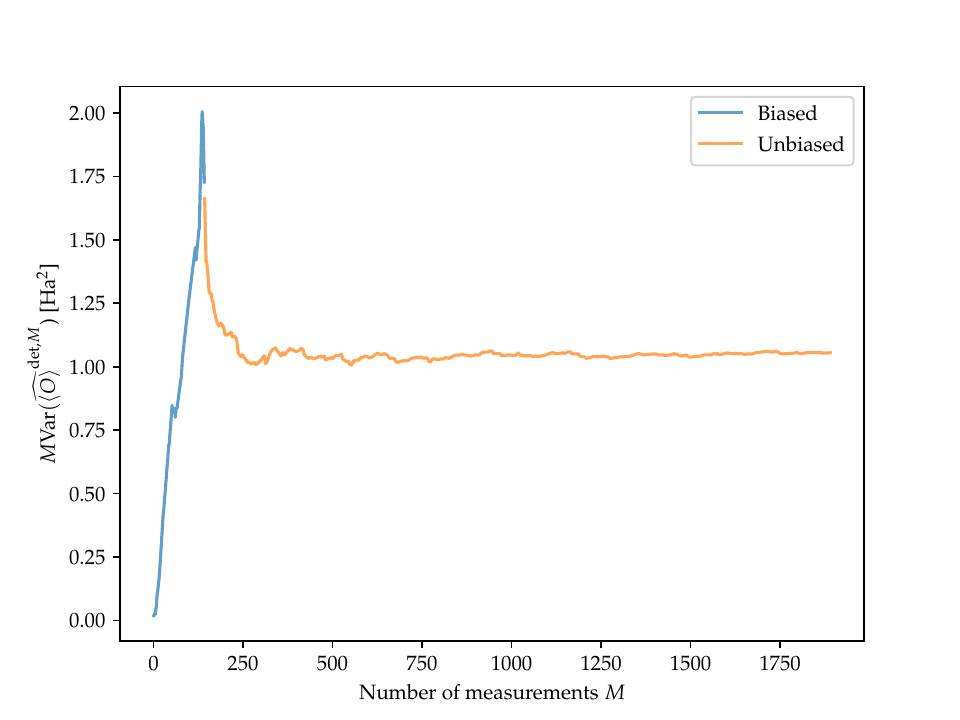}
    \caption{Values of $M\Var(\estO^{\mathrm{det},M})$ for the ShadowGrouping estimator for \ce{LiH} for different values of $M$. The color indicates whether or not the estimator is unbiased.}
    \label{fig:shadow-grouping-one-shot-var}
\end{figure}
\begin{table}
    \centering
    \begin{tabular}{l||r|r|r}
        System & $M_{\mathrm{minimal}}$&$3\lvert \calP \rvert$ & $10\lvert \calP\rvert$ \\
        \hline
        \ce{H2} (STO-3G) & 0.195 & 0.127 & 0.127 \\
        \ce{H2} (6-31G) & 2.06 & 1.77 & 1.72 \\
        \ce{LiH} & 1.7 & 1.06 & 1.06 \\
        \ce{BeH2} & 6.65 & 3.79 & 3.75 \\
        \ce{H2O} & 19.3 & 11.7 & 11.6 \\
        \ce{NH3} & 45.9 & 23.3 & 23.1 \\
    \end{tabular}
    \caption{Values of $M\Var(\estO^{\mathrm{det},M})$ for the ShadowGrouping estimator where $M \in \{M_\mathrm{minimal}, 3\lvert \calP \rvert, 10 \lvert \calP \rvert\}$. Here, $M_\mathrm{minimal} \in \mathbb{N}$ is the smallest number of measurements required such that each Pauli string is measured at least once.}
    \label{tab:shadow-grouping-one-shot-var}
\end{table}

\subsection{The central limit theorem}
\label{sec:clt}
Once we have limited our attention to unbiased estimators, whether Horvitz--Thompson estimators or deterministic ones for sufficiently large measurement counts, we may ask for precise estimates of the amount of resources needed to reach a certain estimation accuracy. As we have noted, in the work based on classical shadow, it is natural to ask if one can bound the number of measurements required to reach a target estimation error with high probability.

In general, for large enough measurement counts, the central limit theorem (CLT) tells us that the estimation error tends to a normal distribution for any of the estimators we have considered, which means that we can derive approximate resource estimates from the variance of those estimators alone.

For the benchmark cases, we can compare the estimates derived from assuming normally distributed errors to those obtained by simulation. Recall that given $n$ independent samples of a normally distributed variable $X \sim N(\mu, \sigma^2)$,
\[
    \Pr(\lvert \bar{X}_n - \mu \rvert < \eps) = 2\Phi\left(\frac{\eps}{\sqrt{n}}{\sigma}\right) - 1,
\]
where $\Phi$ is the standard normal CDF, so that in particular, we need
\[
    n = \left\lceil \frac{\sigma^2 z_{1-\delta/2}^2}{\eps^2}\right\rceil
\]
samples to reach a desired failure probability $\delta$, where here $z_{1-\delta/2} = \Phi^{-1}(1-\delta/2)$.

In Table~\ref{tab:clt-approx}, we use the CLT approximation to estimate the number of measurements $M$ required to reach an error rate of $\epsilon = 0.0016\,\Ha$ (or about $1\,\text{kcal/mol}$, often referred to as ``chemical accuracy'') with a failure probability of $\delta = 0.05$. We compare this to a simulation of an estimator at the same number of measurements, run a total of 1,000 times. Similarly, Figure~\ref{fig:qq-plot} shows Q--Q plots of the resulting estimates against the normal distribution with a location given by the theoretical ground state energy, and variance $\tfrac{1}{M}\Var(\estO_\pi)$. An upshot is that for these cases, the variance itself is sufficient to produce resource estimates.

\begin{table}
    \centering
    \begin{tabular}{l||r|r|r}
        System & $\Var(\estO_\pi)$ & $M$ & Success rate\\
        \hline
        \ce{H2} (STO-3G) & 0.288& 432,175  &94.9\,\% \\
        \ce{H2} (6-31G) &13.9 & 20,789,187 & 95.5\,\%\\ 
        \ce{LiH} & 9.67 & 14,514,273 & 94.4\,\%\\
        \ce{BeH2} & 15.1 & 22,648,509 & 95.1\,\% \\
        \ce{H2O} & 291 & 437,294,227 & 94.8\,\%\\
        \ce{NH3} & 145 & 217,446,003 & 96.0\,\%\\
    \end{tabular}
    \caption{For every molecule, the variance of the Horvitz--Thompson estimator $\estO_\pi$ based on LDF grouping on the FC graph, with LDF maximalization and with probabilities optimized using SLSQP. Based on that variance, the number $M$ of measurements required for a success rate of $95\,\%$, and the actual success rate from 1,000 simulations.}
    \label{tab:clt-approx}
\end{table}

\begin{figure}[H]
    \centering
    
    \includegraphics[width=0.9\linewidth]{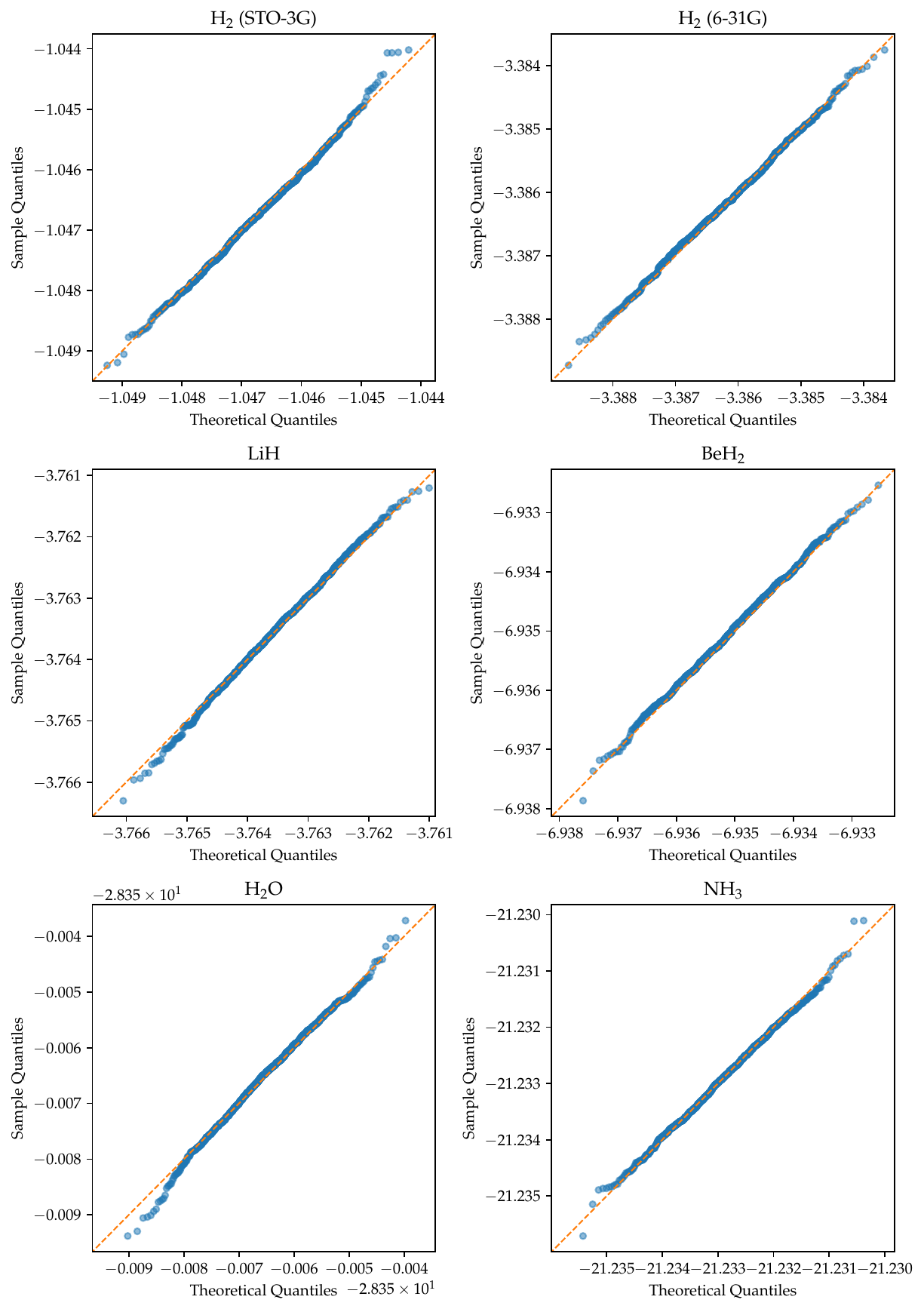}
    \caption{Q--Q plot of 1,000 estimates from the Horvitz--Thompson estimator $\estO_\pi$ based on LDF grouping on the FC graph, with LDF maximalization and with probabilities optimized using SLSQP, after $M$ measurements as in Table~\ref{tab:clt-approx}, the normal distribution centered at the true ground state energy and with variance $\tfrac{1}{M}\Var(\estO_\pi)$.}
    \label{fig:qq-plot}
\end{figure}

Along the same lines, we present numerical experiments that evaluate the success rates as the number of measurements $M$ increases. We consider two scenarios: one using the Horvitz–Thompson estimator and another using the deterministic estimator. In each case, we begin by fixing a molecular structure -- specifically, LiH -- and performing the numerical experiment with various grouping and post-processing strategies. These are defined by selecting cases with different enough variances that they can be clearly displayed together. After identifying the best-performing combination, we apply it to all molecules in the benchmarking set. 

The success rates in these figures are again defined by counting the number of estimates where the error with respect to the exactly calculated ground state energy is within chemical accuracy (i.e. less than $0.0016\,\Ha$). The actual data displayed in the plots is obtained by repeating the experiment 1,000 times, and then performing a bootstrapping procedure, where we resample the distribution to calculate the mean success rates and their standard deviation.

We also display the curves obtained by assuming the CLT approximation and sampling a normal distribution with the variance corresponding to the molecule and method of choice. In Figure~\ref{fig:ht-exp-plot} we show the results of the experiment using the Horvitz--Thompson estimator (and a few choices of sampling strategies), while the results using the deterministic estimator (and uniform sampling) are shown in Figure~\ref{fig:det-exp-plot}.

All combinations of molecules, grouping, and postprocessing schemes shown in these figures are also summarized in Tables~\ref{tab:var-comparison-horvitz-thompson} and \ref{tab:var-comparison-deterministic-uniform}. One exception is the "Trivial" result shown in Figure~\ref{fig:ht-exp-plot}, which assumes the extreme case of no grouping, where each Pauli string is measured independently.
\begin{figure}[p]
    \centering
    \hspace*{-.2cm}
    \includegraphics[width=0.8\linewidth]{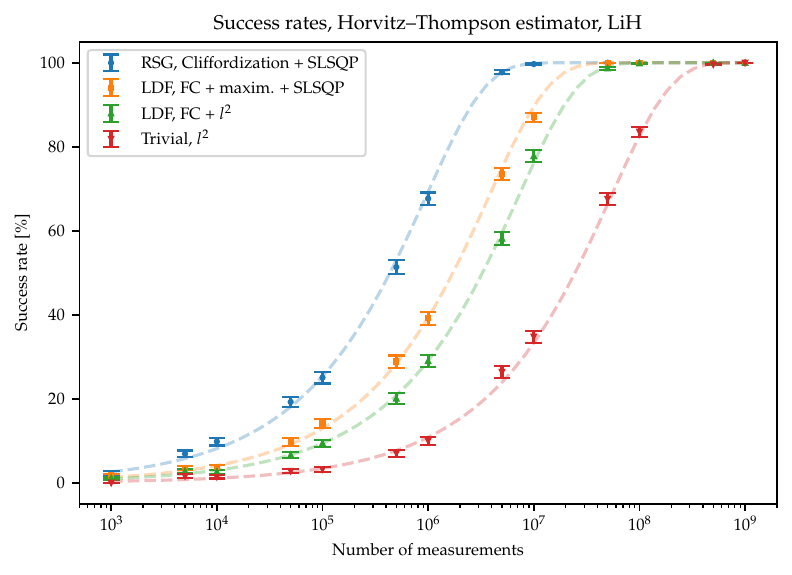}
    \includegraphics[width=0.8\linewidth]{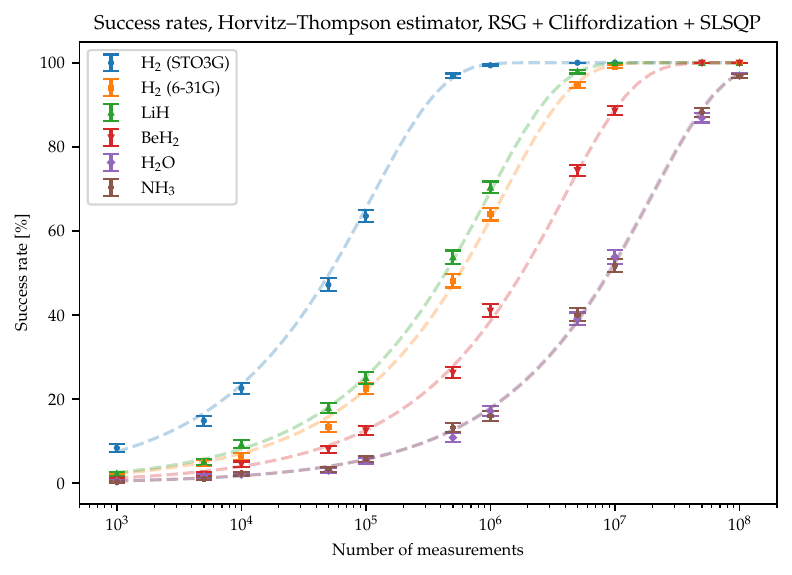}
    \caption{Success rates in the estimation of ground state energies, using the Horvitz--Thompson estimator, for different grouping methods and molecular Hamiltonians and increasingly larger number of measurements $M$. The top figure shows the success rates obtained for a fixed molecule (LiH) and a selection of different grouping schemes (see Table~\ref{tab:var-comparison-horvitz-thompson} for all considered combinations). The bottom figure displays the results using the best-performing method in general (randomized ShadowGrouping (RGS) + Cliffordization + SLSQP) for all molecules of the standard benchmarking set. The dashed curves in each plot correspond to assuming the CLT approximation and sampling from a normal distribution with the corresponding variance given by the estimator (in the bottom plot, the dashed curves for \ce{H2O} and \ce{NH3} overlap).}
    \label{fig:ht-exp-plot}
\end{figure}

\begin{figure}[p]
    \centering
    \includegraphics[width=0.8\linewidth]{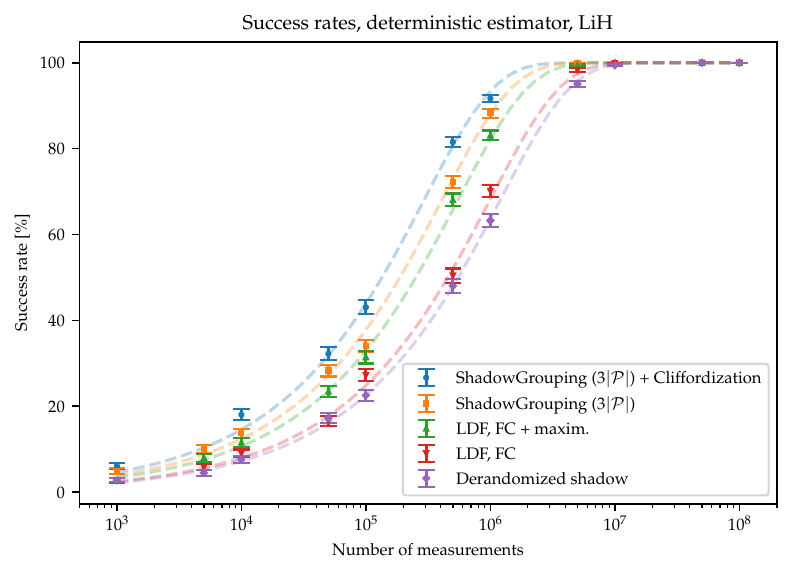}
    \includegraphics[width=0.8\linewidth]{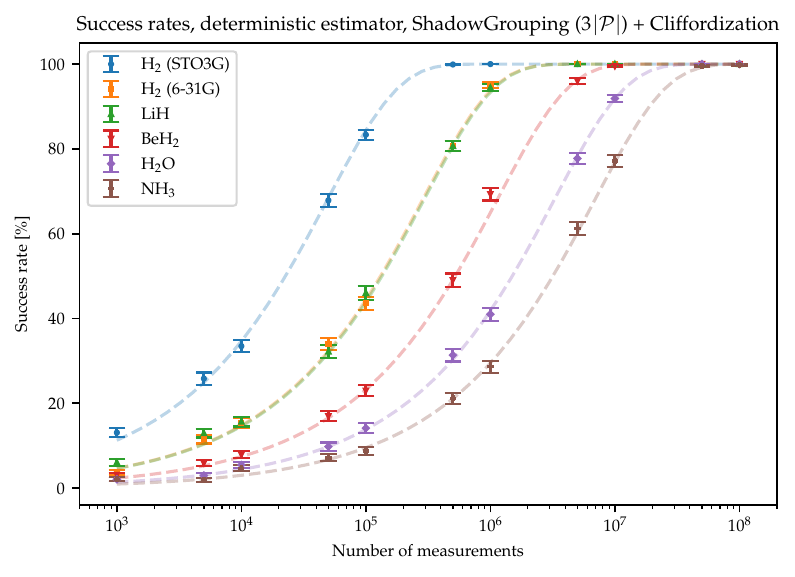}

    \caption{Success rates in the estimation of ground state energies, using the deterministic estimator, for different grouping methods and molecular Hamiltonians and increasingly larger number of measurements $M$. In the bottom plot, the dashed curves for \ce{H2} (6-31G) and LiH overlap.}    
    \label{fig:det-exp-plot}
\end{figure}
\clearpage
\section{Comparison with existing methods}
\label{sec:comparison-with-existing}

At this point we have everything we need to provide a comparison between the different approaches to constructing estimators, to investigate the usefulness of the various augmentations of randomized measurement schemes and how they compare to their deterministic counterparts.

\subsection{Overview of algorithms}
\label{sec:comparison-algorithms}
We will compare the variances of a number of different estimators; as the details on how they have been put on the same footing are significant, we first describe each of them in detail.

Firstly, we take a number of results from the literature:
\begin{itemize}
    \item First, $\ell^1$ sampling refers to the simple ``single Pauli string per measurement'' estimator from Example~\ref{ex:horvitz-thompson}, with probabilities $\pi_P$ proportional to $\lvert c_P\rvert$. The values are taken from \cite{Hadfield2021-hx}.
    \item $\ell^1$ LDF grouping is the result of using the QWC version of LDF grouping, and using the Horvitz--Thompson estimator in which $\pi_G$ is proportional to $\sum_{P \in G} \lvert c_P \rvert$. The results are from \cite{Hadfield2021-hx} but as noted have been reproduced in Table \ref{tab:compare-l1-and-l2} above.
    \item Classical shadow refers to the work of \cite{Huang2020-ws}. Results are taken from \cite{Hadfield2021-hx}.
    \item LBCS is locally-biased classical shadow, as described in \cite{Huang2020-ws} from which the results have also been lifted.
    \item Fermionic shadow (\cite{Zhao2021-gn}) is a variant of classical shadow based on sampling fermionic Gaussian unitaries (FGUs) as opposed to, say, Pauli strings or Clifford operators.
    \item Basis rotation grouping (variance estimates from \cite{Zhao2021-gn}, method from \cite{Huggins2021-al}) refers to estimates arising from using low-rank factorizations of the coefficient tensor to partition the Hamiltonian.
    \item Joint fermionic grouping (\cite{Majsak2025-av}) is another variation on sampling FGUs while simultaneously measuring multiple non-commuting observables. The results are from \cite[Table 1]{Majsak2025-av}, scaled by a factor of $4$, as the table caption suggests.
    \item Uniform (derandomized shadow, OGM, shadow grouping) refers to versions of derandomized shadow \cite{Huang2021-ga}, overlapped grouping measurement \cite{Wu2023-sw} and ShadowGrouping \cite{Gresch2025-zy} that have been (re-)randomized with uniform probability distributions as described in \cite{Zhang2023-lr}. The results are from \cite[Table~I]{Zhang2023-lr}.
    \item C-LBCS is a composite measurement scheme obtained by composing several locally-biased classical shadows, as described in \cite{Zhang2023-lr} from which the results have also been taken.
\end{itemize}
We supplement these with a collection of new results:
\begin{itemize}
    \item LDF, LDVF, and LVF are the greedy graph colouring based algorithms described previously. For each of them we consider qubit-wise (QWC) and full commutation (FC) versions, and for each of those we consider both variants in which the clique covers are kept as partitions, with Horvitz--Thompson probabilities defined by the $\ell^2$ norm as in \eqref{eq:opt-non-overlapping}, as well as versions for which we have applied LDF maximalization within the approporiate graphs, together with SLSQP optimization of the Horvitz--Thompson probabilities.
    \item Derandomized shadow refers to the algorithm of \cite{Huang2021-ga}\footnote{We use an implementation that matches the reference implementation from \url{https://github.com/hsinyuan-huang/predicting-quantum-properties/}}. We use the unweighted version of the algorithm, run until every Pauli string is covered by at least one measurement. Note that derandomized shadow also comes in a version that makes use of the coefficients of the Hamiltonian; we find that by running that one until every Pauli string is covered, one ends up with a large number of pure $X$-basis measurements leading to detrimental overall variance. This is likely a simple implementation bug, but we leave out the results. As derandomized shadow produces Pauli measurements, we also give the result of applying Cliffordization, as described in Section~\ref{sec:cliffordization}.
    \item RDS (randomized derandomized shadow) is the result of using the measurement circuits produced by derandomized shadow as above, then treating those as the groups of a randomized measurement scheme with a corresponding Horvitz--Thompson estimator for which the probabilities are obtained from SLSQP optimization. As above, the resulting cliques are not in general maximal and so we also consider the result of applying LDF maximalization in both the QWC and FC (Cliffordization) cases on top.
    \item Deterministic ShadowGrouping refers to the algorithm of \cite{Gresch2025-zy}\footnote{We use the implementation from \url{https://gitlab.com/GreschAI/shadowgrouping} -- note that this implementation only includes a QWC version, although \cite{Gresch2025-zy} does discuss which modifications would be required to allow full commutation.}, run until all Pauli strings are covered (``minimal'') or until $3 \lvert \calP\rvert$ measurement circuits have been produced (as justified in Section~\ref{sec:variance-derandom}). In each case, we apply Cliffordization to the result.
    \item RSG (randomized ShadowGrouping) is the result of running ShadowGrouping until all Paulis are covered, then using the resulting groups as the basis for a randomized scheme, as with RDS.
\end{itemize}

\subsection{Results}
\label{sec:comparison-results}
In Tables~\ref{tab:var-comparison-existing}--\ref{tab:var-comparison-deterministic-l2}, we show the results of applying all of the above algorithms to the $6$ benchmark observables, recording the resulting variances of the resulting estimators. Concretely,
\begin{itemize}
    \item Table~\ref{tab:var-comparison-existing} lists a number of results as obtained from the literature,
    \item Table~\ref{tab:var-comparison-horvitz-thompson} lists the variances for several Horvitz--Thompson estimators
    \item Table~\ref{tab:var-comparison-deterministic-uniform} lists the variances obtained from deterministic estimators with uniform measurement allocation, and finally,
    \item Table~\ref{tab:var-comparison-deterministic-l2} lists the variances obtained from deterministic estimators with $\ell^2$ allocation, as in Section~\ref{sec:allocation-results}.
\end{itemize}

\begin{table}
\begin{tabular}{l|r|r|r|r|r|r}
Method & \ce{H2} (STO-3G) & \ce{H2} (6-31G) & \ce{LiH} & \ce{BeH2} & \ce{H2O} & \ce{NH3} \\ \hline
$\ell^1$ sampling & 2.49 & 120 & 138 & 418 & 4360 & 3930 \\
$\ell^1$ LDF grouping & 0.402 & 22.3 & 54.2 & 135 & 1040 & 891 \\
classical shadow & 1.97 & 51.4 & 266 & 1670 & 2840 & 14400 \\
LBCS & 1.86 & 17.5 & 14.8 & 67.6 & 257 & 353 \\
fermionic shadow & -- & 69.9 & 155 & 586 & 8440 & 5846 \\
Basis rotation grouping & -- & 22.6 & 7.0 & 68.3 & 6559 & 3288 \\
joint fermionic grouping & -- & 106 & 424 & 980 & 5970 & 4240 \\
uniform derandomized shadow & -- & -- & 8.26 & -- & 712 & 833 \\
uniform OGM & -- & -- & 7.62 & -- & 479 & 357 \\
uniform shadow grouping & -- & -- & 6.85 & -- & 432 & 279 \\
C-LBCS & -- & -- & 6.53 & -- & 430 & 287 \\
\end{tabular}
\bigskip
\caption{Variances [Ha$^2$] of several different methods; see Section~\ref{sec:comparison-algorithms} for references.}
\label{tab:var-comparison-existing}
\end{table}

\begin{table}
\begin{tabular}{l|r|r|r|r|r|r}
Method & \ce{H2} (STO-3G) & \ce{H2} (6-31G) & \ce{LiH} & \ce{BeH2} & \ce{H2O} & \ce{NH3} \\ \hline
LDF, QWC + $\ell^2$ & 0.424 & 21.0 & 46.6 & 117 & 921 & 732 \\
LDF, QWC + maxim. + SLSQP & 0.424 & 5.48 & 7.20 & 23.2 & 131 & 172 \\
LDF, FC + $\ell^2$ & 0.352 & 14.8 & 18.4 & 102 & 1070 & 625 \\
LDF, FC + maxim. + SLSQP & \textbf{0.288} & 13.9 & 9.67 & 15.1 & 291 & 145 \\
\hline
LDVF, QWC + $\ell^2$ & 0.424 & 15.8 & 20.2 & 46.6 & 450 & 570 \\
LDVF, QWC + maxim. + SLSQP & 0.424 & 4.91 & 5.64 & 20.9 & 148 & 150 \\
LDVF, FC + $\ell^2$ & 0.352 & 9.12 & 18.5 & 75.9 & 611 & 315 \\
LDVF, FC + maxim. + SLSQP & \textbf{0.288} & 8.32 & 8.01 & 22.6 & 166 & 127 \\
\hline
LVF, QWC + $\ell^2$ & 0.424 & 26.4 & 34.5 & 69.1 & 591 & 503 \\
LVF, QWC + maxim. + SLSQP & 0.424 & 5.69 & 4.98 & 21.1 & 146 & 164 \\
LVF, FC + $\ell^2$ & 0.364 & 5.89 & 23.1 & 74.3 & 430 & 470 \\
LVF, FC + maxim. + SLSQP & \textbf{0.288} & 5.52 & 6.59 & 26.0 & 104 & 108 \\
\hline
RDS + SLSQP & 0.424 & 5.78 & 5.93 & 29.4 & 164 & 197 \\
RDS + QWC maxim. + SLSQP & 0.424 & 5.37 & 5.34 & 26.1 & 144 & 184 \\
RDS + Cliffordization + SLSQP & \textbf{0.288} & \textbf{3.19} & 3.09 & 16.9 & 83.4 & 112 \\
\hline
RGS + SLSQP & 0.424 & 5.67 & 3.67 & 13.8 & 77.1 & 83.8\\
RGS + Cliffordization + SLSQP & \textbf{0.288} & 3.21 & \textbf{2.45} & \textbf{10.1} & \textbf{49.7} & \textbf{50.9}
\end{tabular}
\bigskip
\caption{Variance [Ha$^2$] comparison for Horvitz--Thompson estimators, $\estO_\pi$.}
\label{tab:var-comparison-horvitz-thompson}
\end{table}

\begin{table}
\resizebox{\textwidth}{!}{\begin{tabular}{l|r|r|r|r|r|r}
Method & \ce{H2} (STO-3G) & \ce{H2} (6-31G) & \ce{LiH} & \ce{BeH2} & \ce{H2O} & \ce{NH3} \\ \hline
LDF, QWC & 0.195 & 6.22 & 8.98 & 28.8 & 169 & 333 \\
LDF, QWC + maxim. & 0.195 & 2.29 & 3.83 & 9.34 & 41.0 & 123 \\
LDF, FC & \textbf{0.125} & 1.15 & 2.56 & 6.88 & 93.8 & 187 \\
LDF, FC + maxim. & \textbf{0.125} & 1.24 & 1.42 & \textbf{2.15} & 17.0 & 28.1 \\
\hline
LDVF, QWC & 0.195 & 4.38 & 4.43 & 18.8 & 75.5 & 342 \\
LDVF, QWC + maxim. & 0.195 & 2.31 & 2.71 & 9.58 & 44.4 & 97.7 \\
LDVF, FC & \textbf{0.125} & 2.10 & 3.48 & 6.03 & 58.1 & 81.8 \\
LDVF, FC + maxim. & \textbf{0.125} & 1.71 & 1.42 & 2.50 & 20.7 & 32.8 \\
\hline
LVF, QWC & 0.195 & 4.70 & 7.24 & 31.8 & 225 & 333 \\
LVF, QWC + maxim. & 0.195 & 2.20 & 3.57 & 10.4 & 62.7 & 134 \\
LVF, FC & \textbf{0.125} & 0.956 & 3.99 & 8.07 & 45.9 & 113 \\
LVF, FC + maxim. & \textbf{0.125} & 0.964 & 1.64 & 3.25 & 21.0 & 30.4 \\
\hline
Derandomized shadow & 0.195 & 1.91 & 3.1 & 11.3 & 46.2 & 118.0 \\
Derandomized shadow + QWC maxim. & 0.195 & 1.83 & 2.87 & 9.66 & 42.2 & 107.0 \\
Derandomized shadow + Cliffordization & 0.195 & 0.872 & 1.55 & 6.44 & 23.0 & 54.2 \\
\hline
ShadowGrouping (minimal) & 0.195 & 2.18 & 1.66 & 6.98 & 20.4 & 45.6 \\
ShadowGrouping (minimal) + Cliffordization & 0.195 & 1.12 & 0.933 & 4.76 & 14.7 & 28.8 \\
ShadowGrouping (3$\lvert \mathcal{P} \rvert$) & 0.127 & 1.76 & 1.06 & 3.78 & 11.7 & 23.3 \\
ShadowGrouping (3$\lvert \mathcal{P} \rvert$) + Cliffordization & 0.127 & \textbf{0.737} & \textbf{0.758} & 2.94 & \textbf{8.44} & \textbf{17.9}
\end{tabular}}
\bigskip
\caption{Variance [Ha$^2$] comparison for deterministic estimators with uniform allocation, $\estO^{\mathrm{det},M}_{\pi_{\mathrm{uniform}}}$.}
\label{tab:var-comparison-deterministic-uniform}
\end{table}

\begin{table}
\begin{tabular}{l|r|r|r|r|r|r}
Method & \ce{H2} (STO-3G) & \ce{H2} (6-31G) & \ce{LiH} & \ce{BeH2} & \ce{H2O} & \ce{NH3} \\ \hline
LDF, QWC & 0.157 & 8.59 & 12.7 & 38.9 & 232 & 295 \\
LDF, QWC + maxim. & 0.157 & 2.62 & 2.86 & 10.8 & 29.3 & 72.3 \\
LDF, FC & 0.259 & 1.84 & 3.95 & 10.3 & 183 & 163 \\
LDF, FC + maxim. & 0.128 & 1.36 & 0.95 & \textbf{2.31} & 14.7 & 30.1 \\
\hline
LDVF, QWC & 0.157 & 5.36 & 3.95 & 13.1 & 94.0 & 153 \\
LDVF, QWC + maxim. & 0.157 & 2.61 & 1.97 & 9.67 & 25.5 & 63.0 \\
LDVF, FC & 0.259 & 4.00 & 4.02 & 9.44 & 94.9 & 83.7 \\
LDVF, FC + maxim. & 0.128 & 2.13 & 1.61 & 2.74 & 17.0 & 61.4 \\
\hline
LVF, QWC & 0.157 & 5.40 & 6.18 & 24.8 & 174 & 187 \\
LVF, QWC + maxim. & 0.157 & 3.26 & 2.45 & 7.27 & 30.6 & 73.1 \\
LVF, FC & \textbf{0.125} & 1.28 & 3.60 & 10.7 & 76.0 & 114.0 \\
LVF, FC + maxim. & 0.128 & 0.954 & 1.68 & 3.16 & 12.8 & 24.1 \\
\hline
ShadowGrouping (3$\lvert \mathcal{P} \rvert$) & 0.157 & 2.72 & 1.66 & 4.79 & 13.3 & 28.3 \\
ShadowGrouping (3$\lvert \mathcal{P} \rvert$) + Cliffordization & 0.128 & \textbf{0.836} & \textbf{0.703} & 3.04 & \textbf{8.19} & \textbf{15.0}
\end{tabular}
\bigskip
\caption{Variance [Ha$^2$] comparison for deterministic estimators with $\ell^2$ allocation, $\estO^{\mathrm{det},M}_{\pi_{\ell^2}}$.}
\label{tab:var-comparison-deterministic-l2}
\end{table}

By taking the lowest obtained variance for each benchmark molecule, we may note the lowest possible number of measurements required to estimate its ground state energy to chemical accuracy, i.e. with an error of at most $0.0016 \, \text{Ha}$, with probability $95\,\%$, according to the CLT approximation (Section~\ref{sec:clt}); see Table~\ref{tab:sota}.

\begin{table}
    \makegapedcells
    \centering
    \begin{tabular}{l||r|r|l}
        System & Number of measurements & Variance [Ha$^2$] & Estimator\\
        \hline
        \ce{H2} (STO-3G) &187,572&0.125&\makecell[l]{Several}\\
        \hline
        \ce{H2} (6-31G) &1,105,920&0.737&\makecell[l]{ShadowGrouping ($3\lvert \calP \rvert$)\\Cliffordization \\Uniform allocation}\\
        \hline
        \ce{LiH} &1,054,901&0.703& \makecell[l]{ShadowGrouping ($3\lvert \calP \rvert$) \\Cliffordization\\ $\ell^2$ allocation}\\
        \hline
        \ce{BeH2} &3,226,226&2.15& \makecell[l]{LDF, FC \\ Maximalization \\ Uniform allocation}\\ 
        \hline
        \ce{H2O} &12,289,668&8.19&\makecell[l]{ShadowGrouping ($3\lvert \calP \rvert$) \\ Cliffordization \\ $\ell^2$ allocation}\\ 
        \hline
        \ce{NH3} &22,508,548&15.0&\makecell[l]{ShadowGrouping ($3\lvert \calP \rvert$) \\ Cliffordization \\ $\ell^2$ allocation}\\
    \end{tabular}
    \caption{Out of all the ones considered, the lowest variance estimator for each of the benchmark molecules, and the number of measurements required to reach chemical accuracy with probability 95\,\%.}
    \label{tab:sota}
\end{table}

\section{Code availability}
\label{sec:codeavailability}
See \url{https://codeberg.org/kvantify/paper-variance-reduction} for the source code used to generate  the results in this paper.
\clearpage

\bibliographystyle{is-alpha}
\bibliography{references}

@ARTICLE{Aaronson2020-tf,
  title     = "Shadow tomography of quantum states",
  author    = "Aaronson, Scott",
  journal   = "SIAM J. Comput.",
  publisher = "Society for Industrial \& Applied Mathematics (SIAM)",
  volume    =  49,
  number    =  5,
  pages     = "STOC18--368--STOC18--394",
  month     =  jan,
  year      =  2020,
  language  = "en"
}

@ARTICLE{Aaronson2004-ff,
  title     = "Improved simulation of stabilizer circuits",
  author    = "Aaronson, Scott and Gottesman, Daniel",
  journal   = "Phys. Rev. A",
  publisher = "American Physical Society (APS)",
  volume    =  70,
  number    =  5,
  month     =  nov,
  year      =  2004,
  copyright = "http://link.aps.org/licenses/aps-default-license",
  language  = "en"
}

@inproceedings{bb-maxclique,
    booktitle = {Simpósio Brasileiro de Pesquisa Operacional},
    author = {Tavares, Wladimir and Campêlo, Manoel and Rodrigues, Diego and Michelon, Philippe},
    year = {2015},
    month = aug,
    pages = {},
    title = {Um Algoritmo de Branch and Bound para o Problema da Clique Máxima Ponderada},
    publisher = {\ }
}

@ARTICLE{Bian2025-au,
  title     = "Adaptive-depth randomized measurement for fermionic observables",
  author    = "Bian, Kaiming and Wu, Bujiao",
  journal   = "Quantum Sci. Technol.",
  publisher = "IOP Publishing",
  volume    =  10,
  number    =  3,
  pages     = "035063",
  month     =  oct,
  year      =  2025,
  copyright = "https://iopscience.iop.org/page/copyright"
}

@ARTICLE{Bonet-Monroig2020-qg,
  title     = "Nearly optimal measurement scheduling for partial tomography of
               quantum states",
  author    = "Bonet-Monroig, Xavier and Babbush, Ryan and O'Brien, Thomas E.",
  journal   = "Phys. Rev. X.",
  publisher = "American Physical Society (APS)",
  volume    =  10,
  number    =  3,
  month     =  sep,
  year      =  2020,
  copyright = "https://creativecommons.org/licenses/by/4.0/",
  language  = "en"
}

@ARTICLE{Burns2025-do,
  title     = "{GALIC}: hybrid multi-qubitwise pauli grouping for quantum
               computing measurement",
  author    = "Burns, Matthew X. and Liu, Chenxu and Stein, Samuel and Peng, Bo
               and Kowalski, Karol and Li, Ang",
  journal   = "Quantum Sci. Technol.",
  publisher = "IOP Publishing",
  volume    =  10,
  number    =  1,
  pages     = "015054",
  month     =  jan,
  year      =  2025,
  copyright = "https://creativecommons.org/licenses/by/4.0/"
}

@ARTICLE{Crawford2021-gz,
  title     = "{Efficient quantum measurement of Pauli operators in the presence
               of finite sampling error}",
  author    = "Crawford, Ophelia and van Straaten, Barnaby and Wang, Daochen
               and Parks, Thomas and Campbell, Earl and Brierley, Stephen",
  journal   = "Quantum",
  publisher = "Verein zur Forderung des Open Access Publizierens in den
               Quantenwissenschaften",
  volume    =  5,
  number    =  385,
  pages     = "385",
  month     =  jan,
  year      =  2021,
  language  = "en"
}

@ARTICLE{Elben2022-wz,
  title     = "The randomized measurement toolbox",
  author    = "Elben, Andreas and Flammia, Steven T. and Huang, Hsin-Yuan and
               Kueng, Richard and Preskill, John and Vermersch, Beno{\^\i}t and
               Zoller, Peter",
  journal   = "Nat. Rev. Phys.",
  publisher = "Springer Science and Business Media LLC",
  volume    =  5,
  number    =  1,
  pages     = "9--24",
  month     =  dec,
  year      =  2022,
  copyright = "https://www.springernature.com/gp/researchers/text-and-data-mining",
  language  = "en"
}

@ARTICLE{Foster1976-oa,
  title     = "An integer programming approach to the vehicle scheduling
               problem",
  author    = "Foster, Brian A. and Ryan, David M.",
  journal   = "Oper. Res. Q.",
  publisher = "JSTOR",
  volume    =  27,
  number    =  2,
  pages     = "367",
  year      =  1976
}

@article{gidney2021stim,
  doi = {10.22331/q-2021-07-06-497},
  title = {Stim: a fast stabilizer circuit simulator},
  author = {Gidney, Craig},
  journal = {{Quantum}},
  issn = {2521-327X},
  publisher = {{Verein zur F{\"{o}}rderung des Open Access Publizierens
                in den Quantenwissenschaften}},
  volume = 5,
  pages = 497,
  month = jul,
  year = 2021
}

@ARTICLE{Gokhale2020-jc,
  title     = {{$O(N^3)$} measurement cost for variational quantum eigensolver
               on molecular {H}amiltonians},
  author    = "Gokhale, Pranav and Angiuli, Olivia and Ding, Yongshan and Gui,
               Kaiwen and Tomesh, Teague and Suchara, Martin and Martonosi,
               Margaret and Chong, Frederic T.",
  journal   = "IEEE Trans. Quantum Eng.",
  publisher = "Institute of Electrical and Electronics Engineers (IEEE)",
  volume    =  1,
  pages     = "1--24",
  year      =  2020,
  copyright = "https://creativecommons.org/licenses/by/4.0/legalcode"
}

@ARTICLE{Gresch2025-kd,
  title     = "Reducing the sampling complexity of energy estimation in quantum
               many-body systems using empirical variance information",
  author    = "Gresch, Alexander and Tepe, U{\u g}ur and Kliesch, Martin",
  journal   = "J. Chem. Theory Comput.",
  publisher = "American Chemical Society (ACS)",
  volume    =  21,
  number    =  15,
  pages     = "7352--7359",
  month     =  aug,
  year      =  2025,
  language  = "en"
}

@ARTICLE{Gresch2025-zy,
  title     = "Guaranteed efficient energy estimation of quantum many-body
               {Hamiltonians} using {ShadowGrouping}",
  author    = "Gresch, Alexander and Kliesch, Martin",
  journal   = "Nat. Commun.",
  publisher = "Springer Science and Business Media LLC",
  volume    =  16,
  number    =  1,
  pages     = "689",
  month     =  jan,
  year      =  2025,
  copyright = "https://creativecommons.org/licenses/by/4.0",
  language  = "en"
}

@ARTICLE{Grimsley2019-cp,
  title     = "An adaptive variational algorithm for exact molecular
               simulations on a quantum computer",
  author    = "Grimsley, Harper R. and Economou, Sophia E. and Barnes, Edwin and
               Mayhall, Nicholas J.",
  journal   = "Nat. Commun.",
  publisher = "Springer Science and Business Media LLC",
  volume    =  10,
  number    =  1,
  pages     = "3007",
  month     =  jul,
  year      =  2019,
  copyright = "https://creativecommons.org/licenses/by/4.0",
  language  = "en"
}

@ARTICLE{Heyraud2025-zs,
  title    = "Unified framework for matchgate classical shadows",
  author   = "Heyraud, Valentin and Chomet, H{\'e}loise and Tilly, Jules",
  journal  = "Npj Quantum Inf.",
  volume   =  11,
  number   =  1,
  pages    = "65",
  month    =  apr,
  year     =  2025,
  language = "en"
}

@ARTICLE{Huang2020-ws,
  title     = "Predicting many properties of a quantum system from very few
               measurements",
  author    = "Huang, Hsin-Yuan and Kueng, Richard and Preskill, John",
  journal   = "Nat. Phys.",
  publisher = "Springer Science and Business Media LLC",
  volume    =  16,
  number    =  10,
  pages     = "1050--1057",
  month     =  oct,
  year      =  2020,
  language  = "en"
}

@ARTICLE{Huang2021-ga,
  title     = "{Efficient estimation of Pauli observables by derandomization}",
  author    = "Huang, Hsin-Yuan and Kueng, Richard and Preskill, John",
  journal   = "Phys. Rev. Lett.",
  publisher = "American Physical Society (APS)",
  volume    =  127,
  number    =  3,
  pages     = "030503",
  month     =  jul,
  year      =  2021,
  copyright = "https://link.aps.org/licenses/aps-default-license",
  language  = "en"
}

@article{Hadfield2021-hx,
  title        = "{Adaptive Pauli Shadows for energy estimation}",
  author       = "Hadfield, Charles",
  year         =  2021,
  primaryClass = "quant-ph",
  eprint       = "2105.12207",
  doi = "10.48550/arXiv.2105.12207",
  journal={arXiv preprint arXiv:2105.12207}
}

@ARTICLE{Hadfield2022-vc,
  title     = "{Measurements of quantum Hamiltonians with locally-biased
               classical shadows}",
  author    = "Hadfield, Charles and Bravyi, Sergey and Raymond, Rudy and
               Mezzacapo, Antonio",
  journal   = "Commun. Math. Phys.",
  publisher = "Springer Science and Business Media LLC",
  volume    =  391,
  number    =  3,
  pages     = "951--967",
  month     =  may,
  year      =  2022,
  copyright = "https://creativecommons.org/licenses/by/4.0",
  language  = "en"
}

@ARTICLE{Huggins2021-al,
  title     = "Efficient and noise resilient measurements for quantum chemistry
               on near-term quantum computers",
  author    = "Huggins, William J. and McClean, Jarrod R. and Rubin, Nicholas C.
               and Jiang, Zhang and Wiebe, Nathan and Whaley, K. Birgitta and
               Babbush, Ryan",
  journal   = "Npj Quantum Inf.",
  publisher = "Springer Science and Business Media LLC",
  volume    =  7,
  number    =  1,
  month     =  feb,
  year      =  2021,
  copyright = "https://creativecommons.org/licenses/by/4.0",
  language  = "en"
}

@ARTICLE{Huangfu2018-sw,
  title     = "Parallelizing the dual revised simplex method",
  author    = "Huangfu, Qi and Hall, Julian A. J.",
  journal   = "Math. Program. Comput.",
  publisher = "Springer Nature",
  volume    =  10,
  number    =  1,
  pages     = "119--142",
  month     =  mar,
  year      =  2018
}

@ARTICLE{Izmaylov2019-gd,
  title     = "Revising the measurement process in the variational quantum
               eigensolver: is it possible to reduce the number of separately
               measured operators?",
  author    = "Izmaylov, Artur F. and Yen, Tzu-Ching and Ryabinkin, Ilya G.",
  journal   = "Chem. Sci.",
  publisher = "Royal Society of Chemistry (RSC)",
  volume    =  10,
  number    =  13,
  pages     = "3746--3755",
  month     =  apr,
  year      =  2019,
  copyright = "http://creativecommons.org/licenses/by/3.0/",
  language  = "en"
}

@ARTICLE{Jiang2020-ls,
  title     = "Optimal fermion-to-qubit mapping via ternary trees with
               applications to reduced quantum states learning",
  author    = "Jiang, Zhang and Kalev, Amir and Mruczkiewicz, Wojciech and
               Neven, Hartmut",
  journal   = "Quantum",
  publisher = "Verein zur Forderung des Open Access Publizierens in den
               Quantenwissenschaften",
  volume    =  4,
  number    =  276,
  pages     = "276",
  month     =  jun,
  year      =  2020,
  language  = "en"
}

@article{Jena2019-hx,
  title        = "{Pauli Partitioning with Respect to Gate Sets}",
  author       = "Jena, Andrew and Genin, Scott and Mosca, Michele",
  year         =  2019,
  primaryClass = "quant-ph",
  eprint       = "1907.07859",
  doi = "10.48550/arXiv.1907.07859",
  journal={arXiv preprint arXiv:1907.07859},
}

@ARTICLE{Kraft1988,
  title     = "A software package for sequential quadratic programming",
  author    = "Kraft, Dieter",
  journal   = "Tech. Rep. DFVLR-FB 88-28",
  publisher = "DLR German Aerospace Center — Institute for Flight Mechanics",
  year      =  1988
}

@BOOK{Lawson1995-gl,
  title     = "Solving least squares problems",
  author    = "Lawson, Charles L.",
  publisher = "SIAM",
  series    = "Classics in applied mathematics",
  year      =  1995,
  address   = "Philadelphia, MS",
  language  = "en"
}

@ARTICLE{Leighton1979-zw,
  title     = "A graph coloring algorithm for large scheduling problems",
  author    = "Leighton, Frank Thomson",
  journal   = "J. Res. Natl. Bur. Stand. (1977)",
  publisher = "National Institute of Standards and Technology (NIST)",
  volume    =  84,
  number    =  6,
  pages     = "489--506",
  month     =  nov,
  year      =  1979,
  language  = "en"
}

@ARTICLE{Li2025-vm,
  title     = "{Resource-Optimized} {Grouping Shadow} for efficient energy
               estimation",
  author    = "Li, Min and Lin, Mao and Beach, Matthew J. S.",
  journal   = "Quantum",
  publisher = "Verein zur Forderung des Open Access Publizierens in den
               Quantenwissenschaften",
  volume    =  9,
  number    =  1694,
  pages     = "1694",
  month     =  apr,
  year      =  2025,
  language  = "en"
}

@ARTICLE{Liang2024-qd,
  title     = "Artificial-intelligence-driven shot reduction in quantum
               measurement",
  author    = "Liang, Senwei and Zhu, Linghua and Liu, Xiaolin and Yang, Chao
               and Li, Xiaosong",
  journal   = "Chem. Phys. Rev.",
  publisher = "AIP Publishing",
  volume    =  5,
  number    =  4,
  month     =  dec,
  year      =  2024,
  language  = "en"
}

@ARTICLE{Majland2023-lz,
  title     = "Fermionic adaptive sampling theory for variational quantum
               eigensolvers",
  author    = "Majland, Marco and Ettenhuber, Patrick and Zinner, Nikolaj
               Thomas",
  journal   = "Phys. Rev. A (Coll. Park.)",
  publisher = "American Physical Society (APS)",
  volume    =  108,
  number    =  5,
  month     =  nov,
  year      =  2023,
  copyright = "https://link.aps.org/licenses/aps-default-license",
  language  = "en"
}

@ARTICLE{Majsak2025-av,
  title     = "A simple and efficient joint measurement strategy for estimating
               fermionic observables and {H}amiltonians",
  author    = "Majsak, Joanna and McNulty, Daniel and Oszmaniec, Micha{\l}",
  journal   = "Npj Quantum Inf.",
  publisher = "Springer Science and Business Media LLC",
  volume    =  11,
  number    =  1,
  month     =  apr,
  year      =  2025,
  copyright = "https://creativecommons.org/licenses/by-nc-nd/4.0",
  language  = "en"
}

@ARTICLE{McClean2016-er,
  title     = "The theory of variational hybrid quantum-classical algorithms",
  author    = "McClean, Jarrod R. and Romero, Jonathan and Babbush, Ryan and
               Aspuru-Guzik, Al{\'a}n",
  journal   = "New J. Phys.",
  publisher = "IOP Publishing",
  volume    =  18,
  number    =  2,
  pages     = "023023",
  month     =  feb,
  year      =  2016,
  copyright = "http://creativecommons.org/licenses/by/3.0/"
}

@InProceedings{networkx,
  author =       {Aric A. Hagberg and Daniel A. Schult and Pieter J. Swart},
  title =        {{Exploring Network Structure, Dynamics, and Function using NetworkX}},
  booktitle =   {Proceedings of the 7th Python in Science Conference},
  pages =     {11 - 15},
  address = {Pasadena, CA USA},
  publisher = {SciPy},
  year =      {2008},
  editor =    {Ga\"el Varoquaux and Travis Vaught and Jarrod Millman},
}

@ARTICLE{Peruzzo2014-uw,
  title     = "A variational eigenvalue solver on a photonic quantum processor",
  author    = "Peruzzo, Alberto and McClean, Jarrod and Shadbolt, Peter and
               Yung, Man-Hong and Zhou, Xiao-Qi and Love, Peter J. and
               Aspuru-Guzik, Al{\'a}n and O'Brien, Jeremy L.",
  journal   = "Nat. Commun.",
  publisher = "Springer Science and Business Media LLC",
  volume    =  5,
  number    =  1,
  pages     = "4213",
  month     =  jul,
  year      =  2014,
  copyright = "https://creativecommons.org/licenses/by-nc-nd/4.0",
  language  = "en"
}

@ARTICLE{SciPy2020-NMeth,
  author  = {Virtanen, Pauli and Gommers, Ralf and Oliphant, Travis E. and
            Haberland, Matt and Reddy, Tyler and Cournapeau, David and
            Burovski, Evgeni and Peterson, Pearu and Weckesser, Warren and
            Bright, Jonathan and {van der Walt}, St{\'e}fan J. and
            Brett, Matthew and Wilson, Joshua and Millman, K. Jarrod and
            Mayorov, Nikolay and Nelson, Andrew R. J. and Jones, Eric and
            Kern, Robert and Larson, Eric and Carey, C J and
            Polat, {\.I}lhan and Feng, Yu and Moore, Eric W. and
            {VanderPlas}, Jake and Laxalde, Denis and Perktold, Josef and
            Cimrman, Robert and Henriksen, Ian and Quintero, E. A. and
            Harris, Charles R. and Archibald, Anne M. and
            Ribeiro, Ant{\^o}nio H. and Pedregosa, Fabian and
            {van Mulbregt}, Paul and {SciPy 1.0 Contributors}},
  title   = {{{SciPy} 1.0: Fundamental Algorithms for Scientific
            Computing in Python}},
  journal = {Nature Methods},
  year    = {2020},
  volume  = {17},
  pages   = {261--272},
  doi     = {10.1038/s41592-019-0686-2},
}

@ARTICLE{Shlosberg2023-ok,
  title     = "Adaptive estimation of quantum observables",
  author    = "Shlosberg, Ariel and Jena, Andrew J. and Mukhopadhyay, Priyanka
               and Haase, Jan F. and Leditzky, Felix and Dellantonio, Luca",
  journal   = "Quantum",
  publisher = "Verein zur Forderung des Open Access Publizierens in den
               Quantenwissenschaften",
  volume    =  7,
  number    =  906,
  pages     = "906",
  month     =  jan,
  year      =  2023,
  language  = "en"
}

@article{Sivarajah_TKET_A_Retargetable_2020,author = {Sivarajah, Seyon and Dilkes, Silas and Cowtan, Alexander and Simmons, Will and Edgington, Alec and Duncan, Ross},doi = {10.1088/2058-9565/ab8e92},journal = {Quantum Science and Technology},month = nov,title = {{TKET: A Retargetable Compiler for NISQ devices}},volume = {6},year = {2020}}

@ARTICLE{Verteletskyi2020-qz,
  title     = "Measurement optimization in the variational quantum eigensolver
               using a minimum clique cover",
  author    = "Verteletskyi, Vladyslav and Yen, Tzu-Ching and Izmaylov, Artur F.",
  journal   = "J. Chem. Phys.",
  publisher = "AIP Publishing",
  volume    =  152,
  number    =  12,
  pages     = "124114",
  month     =  mar,
  year      =  2020,
  language  = "en"
}

@ARTICLE{Wan2023-yf,
  title     = "Matchgate shadows for fermionic quantum simulation",
  author    = "Wan, Kianna and Huggins, William J. and Lee, Joonho and Babbush,
               Ryan",
  journal   = "Commun. Math. Phys.",
  publisher = "Springer Science and Business Media LLC",
  volume    =  404,
  number    =  2,
  pages     = "629--700",
  month     =  dec,
  year      =  2023,
  copyright = "https://creativecommons.org/licenses/by/4.0",
  language  = "en"
}

@ARTICLE{Wu2023-sw,
  title     = "{Overlapped grouping measurement: A unified framework for
               measuring quantum states}",
  author    = "Wu, Bujiao and Sun, Jinzhao and Huang, Qi and Yuan, Xiao",
  journal   = "Quantum",
  publisher = "Verein zur Forderung des Open Access Publizierens in den
               Quantenwissenschaften",
  volume    =  7,
  number    =  896,
  pages     = "896",
  month     =  jan,
  year      =  2023,
  language  = "en"
}

@ARTICLE{Yen2020-lk,
  title     = "Measuring all compatible operators in one series of single-qubit
               measurements using unitary transformations",
  author    = "Yen, Tzu-Ching and Verteletskyi, Vladyslav and Izmaylov, Artur F.",
  journal   = "J. Chem. Theory Comput.",
  publisher = "American Chemical Society (ACS)",
  volume    =  16,
  number    =  4,
  pages     = "2400--2409",
  month     =  apr,
  year      =  2020,
  language  = "en"
}

@ARTICLE{Yen2023-zv,
  title    = "Deterministic improvements of quantum measurements with grouping
              of compatible operators, non-local transformations, and
              covariance estimates",
  author   = "Yen, Tzu-Ching and Ganeshram, Aadithya and Izmaylov, Artur F.",
  journal  = "NPJ Quantum Inf.",
  volume   =  9,
  number   =  1,
  pages    = "14",
  month    =  feb,
  year     =  2023,
  language = "en"
}

@misc{Zhang2023-lr,
  title        = "A composite measurement scheme for efficient quantum
                  observable estimation",
  author       = "Zhang, Zi-Jian and Nakaji, Kouhei and Choi, Matthew and
                  Aspuru-Guzik, Al{\'a}n",
  year         =  2023,
  primaryClass = "quant-ph",
  eprint       = "2305.02439",
  journal={arXiv preprint arXiv:2305.02439},
}

@ARTICLE{Zhao2020-xa,
  title     = "Measurement reduction in variational quantum algorithms",
  author    = "Zhao, Andrew and Tranter, Andrew and Kirby, William M. and Ung,
               Shu Fay and Miyake, Akimasa and Love, Peter J.",
  journal   = "Phys. Rev. A (Coll. Park.)",
  publisher = "American Physical Society (APS)",
  volume    =  101,
  number    =  6,
  month     =  jun,
  year      =  2020,
  copyright = "https://link.aps.org/licenses/aps-default-license",
  language  = "en"
}

@ARTICLE{Zhao2021-gn,
  title     = "Fermionic partial tomography via classical shadows",
  author    = "Zhao, Andrew and Rubin, Nicholas C. and Miyake, Akimasa",
  journal   = "Phys. Rev. Lett.",
  publisher = "American Physical Society (APS)",
  volume    =  127,
  number    =  11,
  pages     = "110504",
  month     =  sep,
  year      =  2021,
  copyright = "https://link.aps.org/licenses/aps-default-license",
  language  = "en"
}

@ARTICLE{Zhu2024-na,
  title     = "Active learning for quantum mechanical measurements",
  author    = "Zhu, Ruidi and Pike-Burke, Ciara and Mintert, Florian",
  journal   = "Phys. Rev. A (Coll. Park.)",
  publisher = "American Physical Society (APS)",
  volume    =  109,
  number    =  6,
  month     =  jun,
  year      =  2024,
  copyright = "https://creativecommons.org/licenses/by/4.0/",
  language  = "en"
}

@article{tilly2022variational,
  title={The variational quantum eigensolver: a review of methods and best practices},
  author={Tilly, Jules and Chen, Hongxiang and Cao, Shuxiang and Picozzi, Dario and Setia, Kanav and Li, Ying and Grant, Edward and Wossnig, Leonard and Rungger, Ivan and Booth, George H and others},
  journal={Physics Reports},
  volume={986},
  pages={1--128},
  year={2022},
  publisher={Elsevier}
}

@article{bharti2022noisy,
  title={Noisy intermediate-scale quantum algorithms},
  author={Bharti, Kishor and Cervera-Lierta, Alba and Kyaw, Thi Ha and Haug, Tobias and Alperin-Lea, Sumner and Anand, Abhinav and Degroote, Matthias and Heimonen, Hermanni and Kottmann, Jakob S and Menke, Tim and others},
  journal={Reviews of Modern Physics},
  volume={94},
  number={1},
  pages={015004},
  year={2022},
  publisher={APS}
}

@article{cerezo2021variational,
  title={Variational quantum algorithms},
  author={Cerezo, Marco and Arrasmith, Andrew and Babbush, Ryan and Benjamin, Simon C. and Endo, Suguru and Fujii, Keisuke and McClean, Jarrod R. and Mitarai, Kosuke and Yuan, Xiao and Cincio, Lukasz and others},
  journal={Nature Reviews Physics},
  volume={3},
  number={9},
  pages={625--644},
  year={2021},
  publisher={Nature Publishing Group UK London}
}

@article{chan2025algorithmic,
  title={Algorithmic shadow spectroscopy},
  author={Chan, Hans Hon Sang and Meister, Richard and Goh, Matthew L. and Koczor, B{\'a}lint},
  journal={PRX Quantum},
  volume={6},
  number={1},
  pages={010352},
  year={2025},
  publisher={APS}
}

@article{sack2022avoiding,
  title={Avoiding barren plateaus using classical shadows},
  author={Sack, Stefan H. and Medina, Raimel A. and Michailidis, Alexios A. and Kueng, Richard and Serbyn, Maksym},
  journal={PRX Quantum},
  volume={3},
  number={2},
  pages={020365},
  year={2022},
  publisher={APS}
}

@article{boyd2022training,
  title={{Training variational quantum circuits with CoVaR: Covariance root finding with classical shadows}},
  author={Boyd, Gregory and Koczor, B{\'a}lint},
  journal={Physical Review X},
  volume={12},
  number={4},
  pages={041022},
  year={2022},
  publisher={APS}
}

@article{chen2021robust,
  title={Robust shadow estimation},
  author={Chen, Senrui and Yu, Wenjun and Zeng, Pei and Flammia, Steven T.},
  journal={PRX Quantum},
  volume={2},
  number={3},
  pages={030348},
  year={2021},
  publisher={APS}
}

@article{karaiskos2025hard,
  title={How hard is it to verify a classical shadow?},
  author={Karaiskos, Georgios and Rudolph, Dorian and Meyer, Johannes Jakob and Eisert, Jens and Gharibian, Sevag},
  journal={arXiv preprint arXiv:2510.08515},
  year={2025}
}

@article{levy2024classical,
  title={Classical shadows for quantum process tomography on near-term quantum computers},
  author={Levy, Ryan and Luo, Di and Clark, Bryan K.},
  journal={Physical Review Research},
  volume={6},
  number={1},
  pages={013029},
  year={2024},
  publisher={APS}
}

@article{arrasmith2023development,
  title={{Development and demonstration of an efficient readout error mitigation technique for use in NISQ algorithms}},
  author={Arrasmith, Andrew and Patterson, Andrew and Boughton, Alice and Paini, Marco},
  journal={arXiv preprint arXiv:2303.17741},
  year={2023}
}

@article{zhao2024group,
  title={Group-theoretic error mitigation enabled by classical shadows and symmetries},
  author={Zhao, Andrew and Miyake, Akimasa},
  journal={npj Quantum Information},
  volume={10},
  number={1},
  pages={57},
  year={2024},
  publisher={Nature Publishing Group UK London}
}

@article{seif2023shadow,
  title={{Shadow distillation: Quantum error mitigation with classical shadows for near-term quantum processors}},
  author={Seif, Alireza and Cian, Ze-Pei and Zhou, Sisi and Chen, Senrui and Jiang, Liang},
  journal={PRX Quantum},
  volume={4},
  number={1},
  pages={010303},
  year={2023},
  publisher={APS}
}

@article{wu2024error,
  title={Error-mitigated fermionic classical shadows on noisy quantum devices},
  author={Wu, Bujiao and Koh, Dax Enshan},
  journal={npj Quantum Information},
  volume={10},
  number={1},
  pages={39},
  year={2024},
  publisher={Nature Publishing Group UK London}
}

@article{jnane2024quantum,
  title={Quantum error mitigated classical shadows},
  author={Jnane, Hamza and Steinberg, Jonathan and Cai, Zhenyu and Nguyen, H. Chau and Koczor, B{\'a}lint},
  journal={PRX Quantum},
  volume={5},
  number={1},
  pages={010324},
  year={2024},
  publisher={APS}
}

@article{amsler2023quantum,
  title={{Quantum-enhanced quantum Monte Carlo: an industrial view}},
  author={Amsler, Maximilian and Deglmann, Peter and Degroote, Matthias and Kaicher, Michael P. and Kiser, Matthew and K{\"u}hn, Michael and Kumar, Chandan and Maier, Andreas and Samsonidze, Georgy and Schroeder, Anna and others},
  journal={arXiv preprint arXiv:2301.11838},
  year={2023}
}

@article{huggins2022unbiasing,
  title={{Unbiasing fermionic quantum Monte Carlo with a quantum computer}},
  author={Huggins, William J. and O’Gorman, Bryan A. and Rubin, Nicholas C. and Reichman, David R. and Babbush, Ryan and Lee, Joonho},
  journal={Nature},
  volume={603},
  number={7901},
  pages={416--420},
  year={2022},
  publisher={Nature Publishing Group UK London}
}

@article{huang2024evaluating,
  title={{Evaluating a quantum-classical quantum Monte Carlo algorithm with Matchgate shadows}},
  author={Huang, Benchen and Chen, Yi-Ting and Gupt, Brajesh and Suchara, Martin and Tran, Anh and McArdle, Sam and Galli, Giulia},
  journal={Physical Review Research},
  volume={6},
  number={4},
  pages={043063},
  year={2024},
  publisher={APS}
}

@article{kiser2024classical,
  title={{Classical and quantum cost of measurement strategies for quantum-enhanced auxiliary field quantum Monte Carlo}},
  author={Kiser, Matthew and Schroeder, Anna and Anselmetti, Gian-Luca R. and Kumar, Chandan and Moll, Nikolaj and Streif, Michael and Vodola, Davide},
  journal={New Journal of Physics},
  volume={26},
  number={3},
  pages={033022},
  year={2024},
  publisher={IOP Publishing}
}

@article{zhao2025quantum,
  title={{Quantum-classical auxiliary field quantum Monte Carlo with matchgate shadows on trapped ion quantum computers}},
  author={Zhao, Luning and Goings, Joshua J. and Aboumrad, Willie and Arrasmith, Andrew and Calderin, Lazaro and Churchill, Spencer and Gabay, Dor and Harvey-Brown, Thea and Hiles, Melanie and Kaja, Magda and others},
  journal={arXiv preprint arXiv:2506.22408},
  year={2025}
}

@article{thomas2025shedding,
  title={Shedding light on classical shadows: learning photonic quantum states},
  author={Thomas, Hugo and Chabaud, Ulysse and Emeriau, Pierre-Emmanuel},
  journal={arXiv preprint arXiv:2510.07240},
  year={2025}
}

@article{birke2026demonstrating,
  title={{Demonstrating and Benchmarking Classical Shadows for Lindblad Tomography}},
  author={Birke, Rune Thinggaard and Severin, Johann Bock and Marciniak, Malthe A. and Hogedal, Emil and Nylander, Andreas and Ahmad, Irshad and Osman, Amr and Bizn{\'a}rov{\'a}, Janka and Rommel, Marcus and Roudsari, Anita Fadavi and others},
  journal={arXiv preprint arXiv:2602.14694},
  year={2026}
}

@article{huang2022quantum,
  title={Quantum advantage in learning from experiments},
  author={Huang, Hsin-Yuan and Broughton, Michael and Cotler, Jordan and Chen, Sitan and Li, Jerry and Mohseni, Masoud and Neven, Hartmut and Babbush, Ryan and Kueng, Richard and Preskill, John and others},
  journal={Science},
  volume={376},
  number={6598},
  pages={1182--1186},
  year={2022},
  publisher={American Association for the Advancement of Science}
}

\end{document}